\DeclareMathAlphabet{\mathcal}{OMS}{cmsy}{m}{n}
  \newenvironment{IEEEkeywords}
    {\textbf{Keywords:}}
    {}
  \newcommand{\appropriatebib}{\bibliographystyle{eptcs}}
   \newcommand{\nmnoteInline}[1]{} %
   \newcommand{\dw}[1]{}  %
    \newcommand{\dw}[1]{\todo[inline, color=blue!30]{#1}}
    \newcommand{\nmnoteInline}[1]{\todo[inline, color=red!30]{#1}}
\newtheorem{theorem}{Theorem}
\newtheorem{definition}[theorem]{Definition}  %
\newtheorem{lemma}[theorem]{Lemma}
\newtheorem{corollary}[theorem]{Corollary}
\newcommand{\ORgate}{\textsc{Or}}
\newcommand{\ANDgate}{\textsc{And}}
\newcommand{\NOTgate}{\textsc{Not}}
\newcommand{\XORgate}{\textsc{Xor}}
\newcommand{\ORACLEgate}{\textsc{Oracle}}
\newclass{\DLOGTIME}{DLOGTIME}
\newclass{\FL}{FL}
\newclass{\FAC}{FAC}
\newclass{\FNC}{FNC}
\newclass{\tally}{tally}
\newclass{\length}{length}
\newclass{\COR}{OR}
\newclass{\CAND}{AND}
\newclass{\CANDCOR}{\CAND\textrm{-}\COR}
\newlang{\Parity}{Parity}
\newcommand{\Reducible}[3]{\ensuremath{#2_{\mathrm{#1}}(#3)}}
\newcommand{\bigO}[1] {\ensuremath{\mathcal{O}(#1)}}
\newcommand{\tallyMachine}{\ensuremath{\mathcal{M}}}
\newcommand{\semiencoder}{\ensuremath{h}}
\newcommand{\former}{\ensuremath{f}}
\newcommand{\uniform}      [2]{\ensuremath{#1\textrm{-uniform-}#2}}
\newcommand{\semiunif}      [2]{\ensuremath{#1\textrm{-semi-uniform-}#2}}
\newcommand{\uniformAND}    [1]{\ensuremath{#1\textrm{-uniform-}\CAND}}
\newcommand{\uniformOR}    [1]{\ensuremath{#1\textrm{-uniform-}\COR}}
\newcommand{\semiuniformOR}[1]{\ensuremath{#1\textrm{-semi-uniform-}\COR}}
\newcommand{\semiuniformAND}[1]{\ensuremath{#1\textrm{-semi-uniform-}\CAND}}
\newcommand{\FOSOR}{\ensuremath{\semiuniformOR{\FAC^0}}}  
\newcommand{\FOSAND}{\ensuremath{\semiuniformAND{\FAC^0}}}  
\newcommand{\FOOR}{\ensuremath{\uniformOR{\FAC^0}}} 
\newcommand{\FOAND}{\ensuremath{\uniformAND{\FAC^0}}}
\title{AND and/or OR: \\ Uniform Polynomial-Size Circuits}
  \author{%
    Niall Murphy%
      \thanks{N.\ Murphy was supported by
      the Campus de Excelencia Internacional Moncloa UCM-UPM PICATA Program
      and the Irish Research Council for Science, Engineering and Technology's EMBARK Initiative.}%
      \institute{Facultad de Inform\'atica\\
       Universidad Polit\'ecnica de Madrid\\ %
       CEI-Moncloa UCM-UPM\\
       Madrid, Spain\\
       \url{niall.murphy@upm.es}}
    \and
    Damien Woods%
      \thanks{D.\ Woods was supported by National Science Foundation (United
        States) grants CCF-1219274, CCF-1162589, and 0832824 (The Molecular
        Programming Project), and Junta de Andalucía (Spain) grant TIC-581.}%
     \institute{Computer Science\\
       Center for Mathematics of Information \\
       California Institute of Technology\\
       Pasadena, CA 91125, USA\\
       \url{woods@caltech.edu}}
    }
\begin{document}
  \maketitle

  \begin{abstract} 
    We investigate the complexity of uniform $\ORgate$ circuits and $\ANDgate$
    circuits of polynomial-size and depth. As their name suggests,  $\ORgate$
    circuits have $\ORgate$ gates as their computation gates, as well as the
    usual input, output and constant $0/1$ gates.  As is the
    norm for Boolean circuits, our circuits have multiple sink gates, which
    implies that an $\ORgate$ circuit computes an $\ORgate$ function on some
    subset of its input variables. Determining that subset amounts to solving a
    number of reachability questions on a polynomial-size directed graph (which
    input gates are connected to the output gate?), taken from a very sparse
    set of graphs. However, it is not obvious whether or not this (restricted) reachability problem can be solved, by say,
    uniform $\AC^0$ circuits (constant depth, polynomial-size, $\ANDgate,
    \ORgate, \NOTgate$ gates). This is one reason why characterizing the power
    of these simple-looking circuits in terms of uniform classes turns out to be
    intriguing. Another is that the model itself seems particularly natural
    and worthy of study.

    Our goal is the systematic characterization of  uniform polynomial-size
    $\ORgate$ circuits, and $\ANDgate$ circuits, in terms of known uniform
    machine-based complexity classes. In particular, we consider the languages
    reducible to such uniform families  of $\ORgate$ circuits, and $\ANDgate$ circuits, 
    under a variety of reduction types.   We give
    upper and lower bounds on the computational power of  these language classes. 
      We find
    that these complexity classes are closely related to $\tally\NL$, the set
    of unary languages within $\NL$, and to sets reducible to $\tally\NL$.
    Specifically, for a variety of types of reductions (many-one, conjunctive
    truth table, disjunctive truth table, truth table, Turing) we give
    characterizations of languages reducible to $\ORgate$ circuit classes in
    terms of languages reducible to $\tally\NL$ classes. Then, some of these
    $\ORgate$ classes are shown to coincide, and some are proven to be
    distinct. We give analogous results for $\ANDgate$ circuits. Finally, for
    many of our $\ORgate$ circuit classes, and analogous $\ANDgate$ circuit
    classes, we prove whether or not the two classes coincide, although we
    leave one such inclusion open.
  \end{abstract}

  \begin{IEEEkeywords}
    Computational complexity; uniform Boolean circuits; AND circuits; OR
    circuits; NL; AC$^0$
  \end{IEEEkeywords}

  \section{Introduction}
  We look at the complexity of simple problems: those defined by uniform
  $\ORgate$ circuits and $\ANDgate$ circuits of polynomial-size and depth.
  As their name suggests, $\ORgate$ circuits have only $\ORgate$ gates as their
  computation gates, as well as the usual input gates, constant ($0/1$) gates,
  and an output gate.  As is the norm for Boolean circuits, our circuits have
  multiple sink gates, which implies that an $\ORgate$ circuit computes an
  $\ORgate$ function on some \emph{subset} of its input variables.
  Determining that subset amounts to solving a number of reachability questions on a
  polynomial-size directed graph (i.e.\ which input gates are connected to the output
  gate?), taken from a very sparse set of graphs. It is not obvious whether or not these reachability questions 
  can be solved, in say, uniform $\AC^0$. Yet these problems are trivially
  in non-uniform-$\AC^0$. This is one reason why characterizing the power of
  these simple-looking circuits in terms of uniform classes turns out to be
  intriguing. Another is that the model itself seems particularly natural and
  worthy of study.

  Our goal is the systematic characterization of polynomial-size uniform
  $\ORgate$ circuits, and $\ANDgate$ circuits, in terms of known uniform
  machine-based complexity classes. In particular, we consider the languages
  reducible to such circuit classes, under a variety of reductions. We give
  upper and lower bounds on the computational power of these classes.
  We find that they are closely related to $\tally\NL$, the set of unary
  languages within $\NL$, and to sets reducible to $\tally\NL$. Specifically,
  for a variety of types of reductions ($\AC^0$ many-one, conjunctive
  truth-table, disjunctive truth-table, truth-table, Turing) we give
  characterizations of languages reducible to $\ORgate$ circuit classes in
  terms of languages reducible to $\tally\NL$ classes. Two of the $\ORgate$
  classes are shown to coincide, and others are proven to be distinct. We give
  analogous results for $\ANDgate$ circuits. Finally, for many of our $\ORgate$
  circuit classes, and analogous $\ANDgate$ circuit classes, we prove whether
  or not the two classes coincide, although we leave one such inclusion open.
  These results are summarized in Figure~\ref{fig:summaryofallresults}.

  We also look at a related notion called semi-uniformity where the uniformity
  function for a circuit family gets access to the input word (and not merely
  its length). For sufficiently weak uniformity functions, this notion is
  analogous to a reduction to a circuit value problem, and there is a very
  simple proof that uniformity is a strictly weaker notion than
  semi-uniformity.  Although not
  covered in this paper, these ideas can be used in an analogous proof that
  semi-uniformity is strictly stronger than uniformity in a model called
  membrane systems~\cite{MW2010c}, answering an open question in that
  field~\cite{Pau2005c}, but which is much simpler to state and prove here in the setting of
  Boolean circuits.

  The paper is structured as follows.
  We begin with basic definitions and results in Sections~\ref{sec:defs} and~\ref{sec:tallynl}. %
  Section~\ref{sec:or} contains our main results
      on characterizing the power of polynomial-size uniform $\ORgate$ circuits.
    We give lower and upper bounds, or characterizations,
      of the complexity classes defined by $\ORgate$ circuits under various kinds of reductions.
    Specifically, we show that polynomial-size uniform $\ORgate$ circuits contain $\tally\NL$
      and are properly contained in $\Reducible{dtt}{\FAC^0}{\tally\NL}$,
      i.e.\ the class of languages $\AC^0$ disjunctive truth-table reducible to $\tally\NL$.
  We go on to show that the following three classes coincide:
    languages many-one reducible,
    and disjunctive truth-table reducible, to uniform $\ORgate$ circuits,
    and the class $\Reducible{dtt}{\FAC^0}{\tally\NL}$.
  These results are shown on the left hand side of Figure~\ref{fig:summaryofallresults}.
  Analogous results for $\ANDgate$ circuits are shown on the right of the same figure and are presented in Section~\ref{sec:and}.
  Results on semi-uniformity are given in Section~\ref{sec:uni-semi}.

  Since we are working with extremely weak classes it is important to use
  appropriate reductions between problems and appropriate uniformity
  requirements on circuits.
  We use $\DLOGTIME$-uniform $\FAC^0$~\cite{BIS1990p}
    for reductions~\cite{AMBCDR2009,AK2010,All2012} and
    circuit uniformity~\cite{Agr2001c,Agrawal2011}, 
    which is powerful enough to implement a variety
    of encoding/decoding functions, 
    but yet suitable for use with our (weak) classes.

  \newclass{\CC}{CC}
  \newcommand{\MODgate}{\textsc{Mod}}

  One way to think about uniform $\ORgate$
  circuits is that they compute the $\ORgate$ function on a subset of
  $n$ input variables, that subset being defined via a number of directed graph connectivity 
  questions that are implicitly encoded by the uniformity condition. 
  The seemingly simpler $\ORgate$ function on all $n$ variables is trivially in 
  depth 1 uniform $\AC^0$, yet there are unanswered questions there too.
  For example,  
   it is not known if the  $\ORgate$ function on all $n$ variables
    (or indeed the $\ANDgate$ function) is in $\CC^0[q]$, 
    the class of problems accepted by
      constant depth polynomial-size circuits that use $\MODgate_q$ gates~\cite{HK2010}.
  Figure~\ref{fig:summaryofallresults} suggests a number of open questions.
  Are there other classes that can be used to give a tighter characterization of the class of
  problems solved by polynomial-size uniform $\ORgate$ circuits
  ($\uniformOR{\FAC^0}$)? Also,  $\uniformAND{\FAC^0}$? 
  Is there a language in
  $\uniformOR{\FAC^0}$ that is not in %
  $\Reducible{m}{\FAC^0}{\tally\NL}$?
  It would be interesting to look at the
  power of uniform polynomial-size circuits consisting of other, apparently
  weak, gates, such as $\XORgate$.
  Ko~\cite{Ko1989} shows that
  the classes of languages polynomial time disjunctive and conjunctive
  reducible to $\tally$ are distinct. 
  If it is possible to apply Ko's technique, or something like it, to our much more restrictive setting (i.e.\ $\AC^0$ disjunctive/conjunctive reducible to $\tally\NL$), this would show that the four classes $\AC^0$ many-one, disjunctive truth-table, conjunctive truth-table, and truth-table reducible to $\tally\NL$ are in fact distinct, which would in turn clarify the relationship between the $\ORgate$ and   $\ANDgate$ classes that we consider.

    \newcommand{\pgfextractangle}[3]{%
        \pgfmathanglebetweenpoints{\pgfpointanchor{#2}{center}}
                                  {\pgfpointanchor{#3}{center}}
        \global\let#1\pgfmathresult  
    }

  \begin{figure}[t]
    \centering
    \resizebox{\linewidth}{!}{%
    \begin{tikzpicture}[%
      equiv/.style={%
        double distance=0.45ex,double,thick,
        -,
        >=stealth',
      },
      noteq/.style={
        very thick,
        >=stealth',
        -,
        decoration={%
          markings,
          mark=at position 0.5 with {%
            \node [#1] {$\neq$};
          }
        },
        postaction={decorate}
      },
      strictf/.style={
        very thick,
        ->,
        draw,
        >=stealth',
        decoration={%
          markings,
          mark=at position 0.5 with {%
            \node [#1] {$\supsetneq$};
          }
        },
        postaction={decorate}
      },
      strict/.style={
        very thick,
        ->,
        draw,
        >=stealth',
        decoration={%
          markings,
          mark=at position 0.5 with {%
            \node [#1] {$\subsetneq$};
          }
        },
        postaction={decorate}
      },
      refer/.style 2 args={
        decoration={%
          markings,
          mark=at position 0.5 with {%
            \node [inner sep=2pt,#2] {(\ref{#1})};
          }
        },
        postaction={decorate}
      },
      include/.style={->,>=stealth',very thick},
      conjecture/.style={dotted,<->,>=stealth'},
      thm_missing/.style={color=darkgray}]

     \node (tallyNL) {$\length\NL$};
     \node (mTallyNL) [above=5em of tallyNL] {$\Reducible{m}{\FAC^0}{\tally\NL}$};
     \node (ttTallyNL) [above=8em of mTallyNL] {$\Reducible{tt}{\FAC^0}{\tally\NL}$};
     \node (TTallyNL) [above=4ex of ttTallyNL] {$\Reducible{T}{\FAC^0}{\tally\NL}$};
     \node (nl) [above=7ex of TTallyNL] {$\NL$};
     \node (p) [above=4ex of nl] {$\P$};
     \node (semiAND_OR) [right=1.5em of p] {$\semiunif{\FAC^0}{\CANDCOR}$};
     \node (fosor)  [left=1.5em of nl] {$\FOSOR$};
     \node (fosand) [right=1.5em of nl] {$\FOSAND$};

     \node  (uniformOR)  [left=10ex of tallyNL, yshift=3em] {$\uniformOR{\FAC^0}$};
     \node  (uniformAND) [right=10ex of tallyNL,yshift=3em] {$\uniformAND{\FAC^0}$};

     \node (dttTallyNL)    [above=3em of mTallyNL, xshift=-4.7em]  {$\Reducible{dtt}{\FAC^0}{\tally\NL}$};
     \node (cttTallyNL)    [above=3em of mTallyNL, xshift=4.7em] {$\Reducible{ctt}{\FAC^0}{\tally\NL}$};
     \node (mUniformOR)    [left=1.5em of dttTallyNL]  {$\Reducible{m}{\FAC^0}{\uniformOR{\FAC^0}}$};
     \node (mUniformAND)   [right=1.5em of cttTallyNL] {$\Reducible{m}{\FAC^0}{\uniformAND{\FAC^0}}$};
     \node (dttUniformOR)  [above=0em of mUniformOR]  {$\Reducible{dtt}{\FAC^0}{\uniformOR{\FAC^0}}$};
     \node (cttUniformAND) [above=0em of mUniformAND] {$\Reducible{ctt}{\FAC^0}{\uniformAND{\FAC^0}}$};
 
      \draw [equiv={yshift=-1.4ex},refer={lem:semi-mono-or-NL}{yshift=2.2ex}] (fosor) -- (nl); 
      \draw [equiv={yshift=-1.4ex},refer={lem:semi-mono-and-NL}{yshift=2.2ex}] (nl) -- (fosand);
      \pgfextractangle{\angle}{TTallyNL}{nl}
      \draw [strict={xshift=-1.9ex,rotate=\angle},refer={lem:T_reduce_tallyNL_subneq_NL}{xshift=2.7ex}] (TTallyNL) -- (nl); 
      \draw [include] (nl) -- (p); 

      \draw [include] (dttTallyNL) -- (ttTallyNL); %
      \draw [include] (cttTallyNL) -- (ttTallyNL);

      \pgfextractangle{\angle}{tallyNL}{mTallyNL}
      \draw [strict={yshift=-3.5ex,xshift=-1.6ex,rotate=\angle}] (tallyNL) -- (mTallyNL);
      \draw [include] (ttTallyNL) -- (TTallyNL);
      
      \pgfextractangle{\angleb}{tallyNL}{uniformOR}
      \draw [strictf={yshift=-1.4ex,rotate=(\angleb-180)},refer={thm:tallyNL-in-uni-OR}{yshift=-1.4ex,xshift=-1.8em}] (tallyNL) -- (uniformOR);
      
      \pgfextractangle{\anglec}{uniformOR}{dttTallyNL}
      \draw [strict={xshift=-2ex,rotate=\anglec},refer={thm:uniformOR_subneq_dttTallyNL}{xshift=2.4ex}] (uniformOR) -- (dttTallyNL);
      
      \pgfextractangle{\angled}{tallyNL}{uniformAND}
      \draw [strict={yshift=-1.4ex,rotate=\angled},refer={thm:tallyNL-in-uni-AND}{yshift=-1.4ex,xshift=1.8em}] (tallyNL) -- (uniformAND); 
      
      \pgfextractangle{\anglee}{uniformAND}{cttTallyNL}
      \path
      [strictf={xshift=2ex,rotate=(\anglee-180)},refer={lem:uniformAND_subneq_cttTallyNL}{xshift=-2.4ex}] (uniformAND) -- (cttTallyNL);

      \draw [equiv={xshift=0.4ex,yshift=-1.6ex}] (mUniformOR.east) -- (dttTallyNL.west);
      \draw [equiv={xshift=1ex,yshift=1.6ex}] (dttUniformOR.east) -- (dttTallyNL.north west);  
      \node at ($ (dttUniformOR.east) +(0.4,0.3) $) {(\ref{lem:theORcluster_eq})};  %
      \draw [equiv={xshift=-1.5ex,yshift=1ex}] (cttTallyNL.north east) -- (cttUniformAND.west);
      \draw [equiv={xshift=-1ex,yshift=-1ex}] (cttTallyNL.east) -- (mUniformAND.west);
      \node at ($ (cttUniformAND.west) + (-0.4,0.3) $) {(\ref{lem:theANDcluster_eq})};
      \draw [include] (mTallyNL) -- (cttTallyNL);
      \draw [include] (mTallyNL) --  (dttTallyNL);

      \draw [noteq={yshift=-1.4ex,xshift=3em},%
              refer={lem:OR_noteq_AND}{yshift=1.5ex,xshift=3em}]%
              (uniformAND)  -- (uniformOR);

      \draw [noteq={yshift=1.4ex,xshift=-1.6}] (uniformOR)  -- (mTallyNL);
      \draw [noteq={yshift=1.4ex,xshift=1.6}] (uniformAND) -- (mTallyNL);

      \draw [equiv={xshift=-1.7ex},refer={thm:semiuniformCircuitsP}{yshift=2.2ex}] (p) -- (semiAND_OR);

    \end{tikzpicture} 
    } %
    \caption{Summary of results.
      The left side shows relationships between uniform polynomial-size
      $\ORgate$ circuit languages, $\tally\NL$ %
        and sets reducible to these classes. %
      The right side shows analogous relationships for $\ANDgate$ circuit classes. %
      $\Reducible{r}{\FAC^0}{\C}$ denotes $\AC^0$ computable reductions
        of type $\mathrm{r}$ to a class~$\C$.
      Numerical labels refer to theorem statements, and symbols are used to show inclusion type,
        with an unlabelled arrow denoting $\subseteq$.
        To save space, Theorems~\ref{lem:theTuring_cluster_eq} and~\ref{lem:theTruthTable_cluster_eq} are not shown.}
    \label{fig:summaryofallresults}
  \end{figure}

  \section{Definitions}
  \label{sec:defs}
  We now give some basic  definitions based on those in the
  literature~\cite{AK2010, AJ1993, GHR1995x}.  For more details
  on Boolean circuits see~\cite{Vol1999x}.

  For a function $f \colon \{0,1\}^* \to \{0,1\}^*$ and integers $m,n \geq 1$
  let $f_n \colon \{0,1\}^n \to \{ 0,1\}^{m}$ be the restriction of $f$ to
  domain and range consisting  of strings of length $n$ and $m$ respectively
  (we consider only functions~$f$ where for each $n$ there is an $m$ where all
  length-$n$ strings in $f$'s domain are mapped to length-$m$ strings, thus~$f = \bigcup_n f_n$).  

  A \emph{circuit} on $n$ variables $w_0,\ldots,w_{n-1}$ is a directed acyclic
  multi-graph (there may be multiple edges, or wires, between vertices---useful for oracle gates).  The
  vertices of the circuit are generally referred to as gates.  The in-degree
  (out-degree) of a gate is called its fan-in (fan-out).  Each source vertex
  (fan-in 0) is labelled either by one of the input variables
  $w_0,\ldots,w_{n-1}$ or by a constant ``0'' or ``1'' (false or true).  Each
  non-source vertex is labelled by a function name, such as $\ANDgate$,
  $\ORgate$, $\NOTgate$, or $\ORACLEgate$. 
  
  In this paper, we use $\ORACLEgate$ gates. For a given circuit $C$, 
  it will be the case that all $\ORACLEgate$ gates in $C$ compute exactly the same Boolean
  function $g \colon \{0, 1\}^n \to \{0, 1\}$ for $n > 1$, although of course their inputs may be different.  
  We are using the
  following conventions for circuits with tally oracles.  The tally alphabet is
  $\{1\}$. 
  A \emph{tally oracle gate} with $n$ ordered input wires,  takes a string
  of the form $0^{n-i}1^{i}$,  $0 \leq i \leq n$ (encoding the unary word $1^i$) as input,
  and outputs a single bit. 

  Gates with fan-out of 0 (called sinks) may or may not be  
  designated as \emph{output} gates. 

  Given an input $w \in \{0, 1\}^n$, one can inductively assign a Boolean value
  to each vertex of a circuit as follows: each source (\emph{input}) vertex labelled
  by an input variable gets the value of that variable, each  source
  (\emph{constant}) vertex labelled by a constant gets the value of that constant,
  and each other vertex gets the value of the function that labels it applied to
  the values of its children.  Incoming and outgoing edges to a vertex are
  assumed to be ordered (for oracle gates).

  The \emph{depth} of a circuit is the length of the longest path from an input
  vertex to an output vertex. The \emph{size} of a circuit is the number of wires
  it contains~\cite{AK2010}.
  A circuit computes a function on a fixed number of Boolean variables.  We
  consider functions of an arbitrary  number of variables by defining (possibly
  infinite) families of circuits.  We say a family of circuits~$\mathcal{C} =
  \{ C_n \mid n \in \mathbb{N} \}$ computes a function~$f \colon \{0, 1\}^* \to
  \{0, 1\}^*$ if for all $n \in \mathbb{N}$, and for all $w\in\{ 0,1\}^n$
  circuit  $C_n$ outputs the string $f(w)$ (we consider only functions $f$
  where for each $n$ there is an $m$ where all length-$n$ strings in $f$'s
  domain are mapped to length-$m$ strings).  We say a family of circuits
  $\mathcal{C}$ decides a language~$L \subseteq \{ 0,1\}^*$ if for each~$w\in\{0,1\}^n$ circuit~$C_n
  \in \mathcal{C}$ on input $w$ outputs~$1$ if~$w \in L$ and~$0$ if~$w \notin
  L$.
  
  In a \emph{non-uniform} family of circuits there is no required  
   similarity between family members.    
  In order to specify such a requirement   
  we use a \emph{uniformity function} that 
  algorithmically specifies the similarity between members of a circuit family. 
  Roughly speaking, a \emph{uniform circuit family}~$\mathcal{C}$ is an infinite sequence of
  circuits with an associated function~$\former : \{1\}^* \rightarrow
  \mathcal{C}$ that generates members of the family and is
  computable within some resource bound.  
 More precisely:  

  \begin{definition}[$\C$-Uniform circuit family]
    Let $\C$ be a set of functions. A circuit family $\mathcal{C}$ is
    $\C$-uniform, if there is function $f \in \C$, $\former : \{1\}^* \rightarrow
  \mathcal{C}$, where $f(1^n) = C_n$ for all $n \in \mathbb{N}$, 
  and $C_n \in  \mathcal{C}$ is a description of
  a circuit with $n$ input gates (we use $C_n$ to denote
  either a circuit or its encoding as a binary string).
  \end{definition}

  When dealing with uniformity for small complexity classes one of the 
  preferred uniformity conditions is $\DLOGTIME$-uniformity~\cite{BIS1990p}.
  This definition uses an %
  ordering on wires that leave and enter a given gate.

  \begin{definition}[\cite{AK2010}]
    \label{def:dlogtime}
    A circuit family $\mathcal{C}$ is $\DLOGTIME$-uniform if there is
    a procedure that on input $(n, i, r, j, s, t)$, where $n, i, r, j,  s \in
    \mathbb{N}$ are encoded in binary and $t$ is a gate type (e.g., $\ANDgate$,
    $\ORgate$, $\NOTgate$,  input, 0, 1) encoded in binary, runs in time
    \emph{linear} in its input size and accepts if and only if the gate of
    $C_n$ having label $i$ is of type~$t$ and its $r$-th child is the $s$-th
    output  of the gate having label $j$.  In the case where gate $i$ is an input
    gate, the procedure accepts if gate $i$ takes the value of the $s$-th input
    bit.  Furthermore, the procedure accepts inputs of the form $(n, i, j, s,
    output)$ if and only if the $s$-th output wire of gate $i$ is the $j$-th
    output gate %
    of the circuit $C_n$.  We also require that the procedure
    accepts the input $(n, i, d)$ if and only if $d$ is equal to the fan-in of
    the gate of $C_n$ having label $i$. 
  \end{definition}
  
  $\AC^0$ is the set of languages decidable by constant-depth polynomial-size
    (in input length~$n$) \linebreak $\DLOGTIME$-uniform circuits built using unbounded
    fan-in $\ANDgate$ and $\ORgate$ gates, and $\NOTgate$ gates with fan-in 1.
  $\FAC^0$ is the class of functions computable by polynomial-size
    constant-depth $\DLOGTIME$-uniform circuits built using unbounded fan-in
    $\ANDgate$ and $\ORgate$ gates, and $\NOTgate$ gates with fan-in 1.

  An \emph{$\ORgate$ circuit} is a circuit that uses only disjunctive
  logic, that is, a circuit that has only $\ORgate$, constant, and input
  gates. One of the $\ORgate$ gates is denoted as the output gate.
  Similarly an \emph{$\ANDgate$ circuit} is a circuit that uses only conjunctive
  logic, that is, a circuit that has only $\ANDgate$, constant, and input
  gates. One of the $\ANDgate$ gates is denoted as the output gate.
  Note that $\ORgate$  and $\ANDgate$ circuits may have multiple non-output sinks.
    Let non-uniform-$\COR$ (non-uniform-$\CAND$) be the set of decision problems that solved by non-uniform families of
  $\ORgate$ ($\ANDgate$) circuits.
  
  In this paper, we are concerned with $\FOOR$: the class of languages solved
  by uniform polynomial size  $\ORgate$ circuits, formally defined as follows.
  \begin{definition}
    Let $\FOOR$ be the set of decision problems over the 
    alphabet $\{ 0,1\}$ that are solved by $\FAC^0$ uniform families of $\ORgate$ 
    circuits. 
  \end{definition}
  The class $\FOAND$ is defined analogously, but using $\ANDgate$ instead of $\ORgate$ circuits.

  \begin{lemma}
    \label{lem:OR_noteq_AND}
    $\uniformOR{\FAC^0} \neq \uniformAND{\FAC^0}$.
  \end{lemma}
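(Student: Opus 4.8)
The plan is to separate the two classes by exhibiting an explicit witness language; I will put the $n$-ary disjunction into $\uniformOR{\FAC^0}$ and show it is not even computed by \emph{non-uniform} $\ANDgate$ circuits (the reverse separation, using the $n$-ary conjunction, would work just as well). Concretely, let $L_\vee = \{ w \in \{0,1\}^* : w \neq 0^{|w|}\}$ be the set of binary words containing at least one $1$.

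First I would check $L_\vee \in \uniformOR{\FAC^0}$. The circuit $C_n$ has input gates $w_0,\dots,w_{n-1}$, a single $\ORgate$ gate designated as output, and one wire from each $w_i$ to that gate; on input $w$ it outputs $\bigvee_{i<n} w_i$, which is $1$ iff $w\in L_\vee$ (for $n=0$ the lone $\ORgate$ gate has fan-in $0$ and computes the constant $0$, correctly rejecting the empty word). The description of $C_n$ is a completely regular string of length polynomial in $n$ whose bits are a simple function of $n$ written in binary and of the gate/wire indices, so the map $1^n \mapsto C_n$ is computable in $\FAC^0$ (indeed $\DLOGTIME$-uniform $\FAC^0$ easily produces such structured descriptions); hence the family is $\FAC^0$-uniform.

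Next I would establish the stronger claim $L_\vee \notin$ non-uniform-$\CAND$, which in particular gives $L_\vee \notin \uniformAND{\FAC^0}$. The tool is a normal form for $\ANDgate$ circuits: a straightforward induction on gate depth shows that, since every non-input non-constant gate (including the output) is an $\ANDgate$ gate, every gate $g$ evaluates on input $w$ to $\bigl(\bigwedge_{i\in S_g} w_i\bigr)\wedge c_g$, where $S_g$ is the set of input gates with a path to $g$ and $c_g\in\{0,1\}$ is $0$ iff some constant-$0$ gate has a path to $g$. Applying this to the output gate, any $\ANDgate$ circuit on $n$ variables computes either the constant $0$, or $\bigwedge_{i\in S}w_i$ for some $S\subseteq\{0,\dots,n-1\}$ (the empty conjunction read as the constant $1$). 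Equivalently, the accepting set of an $\ANDgate$ circuit, when nonempty, is an up-set in $\{0,1\}^n$ with a \emph{unique} minimal element. But for $n\ge 2$ the accepting set of the $n$-ary disjunction is $\{0,1\}^n\setminus\{0^n\}$, which is nonempty and has the $n\ge 2$ distinct minimal elements $e_0,\dots,e_{n-1}$ (where $e_i$ has a $1$ exactly in position $i$); so no $\ANDgate$ circuit on $n\ge 2$ inputs computes it. Combining the two parts yields $\uniformOR{\FAC^0}\neq\uniformAND{\FAC^0}$.

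There is no deep obstacle here: the argument is short and structural. The only points that need care are (i) the encoding/uniformity bookkeeping for the trivial $\ORgate$-circuit family, which is routine, and (ii) the small-$n$ edge cases forced by the requirement that the output gate be an $\ORgate$ (resp.\ $\ANDgate$) gate, handled as indicated above.
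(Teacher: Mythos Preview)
Your proposal is correct and takes essentially the same approach as the paper: the paper's one-sentence proof simply observes that an $\ORgate$ circuit computes an $\ORgate$ of some subset of its inputs, which in general no $\ANDgate$ circuit can compute (and vice versa), and you flesh out exactly this observation with the explicit witness $L_\vee$ and the normal-form/unique-minimal-element argument for $\ANDgate$ circuits. Your version is more detailed and rigorous, but the underlying idea is identical.
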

  \begin{proof} 
    An $\ORgate$ circuit computes an $\ORgate$ function on some subset of its
    inputs; in general there is no $\ANDgate$ circuit that computes the same function, and
    vice-versa.  
  \end{proof}

  $\NL$ is the class of languages accepted  
  by non-deterministic logarithmic-space Turing machines. Such machines have a 
  read-only input tape, a write-only output tape and a read-write work tape 
  whose length is a logarithmic function of  input length. 
  The class of functions $f: \{0,1\}^* \to \{0,1\}^*$
  computed by non-deterministic logarithmic-space Turing machines (with an
  additional write-only output tape) is denoted $\FNL$.
  Let $\tally$ be the set of all languages over the one-letter alphabet $\{ 1\}$.
  Let $\length$ be the set of all languages $L \subseteq \{ 0, 1\}^*$ such that
    if $w \in L$ then all words in $\{0,1\}^{|w|}$ are in $L$.

  We define $\tally\NL = \tally \cap \NL$, i.e.\ the class of all tally
  languages and length encoded languages in $\NL$.  
  Let $\tally\co\NL = \tally \cap \co\NL$.
  The following lemma follows from $\NL = \co\NL$,
    (i.e.\ let $L \in \tally\NL \subsetneq \NL = \co\NL$,
    then $L \in \co\NL$ implies $L \in \tally\co\NL$;
    a similar argument holds for the converse):
  \begin{lemma}
    \label{lem:tallynl_eq_tallyconl}
    $\tally\NL = \tally\co\NL$
  \end{lemma}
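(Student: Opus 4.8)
The plan is to derive this as an immediate consequence of the Immerman--Szelepcs\'enyi theorem, $\NL = \co\NL$, combined with the fact that both sides are defined by intersecting that pair of classes with the \emph{same} set $\tally$ of unary languages. No new construction is needed; the only care required is to unwind the definitions correctly.

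First I would recall the definitions stated just above: $\tally\NL = \tally \cap \NL$ and $\tally\co\NL = \tally \cap \co\NL$, where $\tally$ is the set of all languages over the one-letter alphabet $\{1\}$. Next I would invoke $\NL = \co\NL$. Then, for the inclusion $\tally\NL \subseteq \tally\co\NL$, I would take an arbitrary $L \in \tally\NL$, so $L \in \tally$ and $L \in \NL$; since $\NL = \co\NL$ we get $L \in \co\NL$, hence $L \in \tally \cap \co\NL = \tally\co\NL$. The reverse inclusion $\tally\co\NL \subseteq \tally\NL$ is obtained by the symmetric argument: any $L \in \tally\co\NL$ lies in $\tally$ and in $\co\NL = \NL$, so $L \in \tally\NL$. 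Combining the two inclusions gives the claimed equality.

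There is essentially no obstacle here — the ``hard part'' is entirely absorbed into the cited theorem $\NL = \co\NL$. The only thing worth flagging is that $\tally$ is a plain set of languages rather than a complexity class, so one should not be tempted to reason about ``complementation within $\tally$''; the equality is pulled back verbatim from the ambient identity $\NL = \co\NL$ by intersecting both sides with the fixed family $\tally$.
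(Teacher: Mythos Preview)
Your proposal is correct and matches the paper's own argument essentially verbatim: the paper also derives the lemma directly from $\NL = \co\NL$ by noting that any $L \in \tally\NL$ lies in $\co\NL$ and hence in $\tally\co\NL$, with the converse handled symmetrically. There is nothing to add.
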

    Let $\length\NL =  \length \cap \NL $ and $\length\co\NL =  \length \cap \co\NL $.  Also $\length\NL = \length\co\NL$. 
  We  make use of
  functions from the class $\tally\FAC^0 = \tally \cap \FAC^0$ which is
  contained in $\tally\NL$.

  Each language $L \subseteq
  \{ 0,1\}^{\ast}$ has an associated total \emph{characteristic function}
  $\chi_L \colon \{ 0,1\}^{\ast} \to \{0, 1\}$ defined  by $\chi_L (w) = 1$
  if and only if $w \in L$.  

  $\Parity \subseteq \{ 0,1\}^*$ is the set of binary strings that contain an odd number of 1s.

    \subsection{Reductions} %
    For concreteness, we explicitly define some standard types of reductions. 
    Let $A, B \subseteq \{ 0,1\}^{\ast}$.
  
    \begin{definition}[Many-one reducible]
      \label{def:many_One_reduction}
      Set $A$ is many-one reducible to set $B$, written $A \leq^{\C}_m B$,
      if there is a function $f$ that  is $\C$-computable with the
      property that for all $w$, $w \in A$, if and only if $f(w) \in B$.
    \end{definition}

    The following definition of truth table reductions comes
    from~\cite{BB1988,BHL1995}, for a more formal definition see~\cite{LLS1975}.
  
    \begin{definition}[Truth-table reduction]
      \label{def:truthtablereduction}
      Set $A$ is $\C$ truth-table reducible to $B$,
      written $A \leq^{\C}_{tt}B$,
      if there exists $\C$-computable functions $\tau$
        and $\sigma$ %
        such that for all $w \in \{0,1\}^*$,
        $\tau(w)$ is 
          a list of $\ell \in \mathbb{N}$ strings $ a_1, \ldots, a_{\ell}$,
        also $\sigma(w)$ is a truth table (Boolean function) with $\ell$ variables,
        and $w \in A$ if and only if 
        $\sigma( \chi_B(a_1),\ldots, \chi_B(a_{\ell}) )$ evaluates to true,
        where
        $\chi_{B}$ is the characteristic function of $B$.
    \end{definition}
    A \emph{disjunctive} truth table reduction (dtt) is one where 
      at least one string generated by $\tau(w)$ is in~$B$.
    Or equivalently, where $\sigma(w) = \bigvee_{1\leq i \leq \ell} \chi_B(a_i)$.
    A \emph{conjunctive} truth table reduction (ctt) is one where 
      all the strings generated by $\tau(w)$ are in $B$.
    Or equivalently, where $\sigma(w) = \bigwedge_{1\leq i \leq \ell} \chi_B(a_i)$.

    \begin{definition}[Turing reducible]
      \label{def:turing_reduction}
      Set $A$ is $\C$ Turing reducible to $B$, written $A \leq^{\C}_{T}B$, if
      there is a $\C$-computable oracle circuit (or Turing machine) $M$ such that   $w \in A$
      iff $M$ accepts $w$ with $B$ as its oracle. %
    \end{definition}

    The following implications follow directly from these definitions, 
      for more details see~\cite{LLS1975}.
      \begin{equation*}
        \begin{tikzpicture}[baseline]
          \matrix (m) [row sep=-1em, column sep=-0.8em]{
            & \node[rotate=15]{$\implies$}; & \node{$A \leq^{\C}_{dtt} B$}; & \node[rotate=-15]{$\implies$}; & \\ %
             \node{$A \leq^{\C}_{m} B$}; & & & & \node{$A \leq^{\C}_{tt} B \implies A \leq^{\C}_T B$ };\\ %
            & \node[rotate=-15]{$\implies$}; & \node{$A \leq^{\C}_{ctt} B$}; & \node[rotate=15]{$\implies$}; & \\
          };
        \end{tikzpicture}
      \end{equation*}

    Let $\Reducible{r}{\FAC^0}{\C}$ be the set of all languages that are
      $\FAC^0$ reducible to languages in $\C$ via some type of reduction 
      $\mathrm{r} \in \{ \text{m, dtt, ctt, tt, T} \}$. 

  \subsection{Some useful $\FAC^0$ functions}
    \label{sec:facfunc}
    \paragraph{Pairing function}
    We require a pairing function that is injective and extremely easy
    ($\FAC^0$) to compute.  We use the pairing function
    that interleaves the bits of two binary string arguments $a$ and $b$. For
    example, the binary strings $a=a_2 a_1 a_0$ and $b=b_2 b_1 b_0$ are paired
    as the interleaved string $\langle a,b \rangle =
    b_2 a_2 b_1 a_1 b_0 a_0$.
    The circuits for interleaving and de-interleaving have only  a single input gate layer and
    a single output gate layer (and so are 2-layer $\AC^0$ circuits). This circuit can be shown to be
    $\DLOGTIME$-uniform.

    \paragraph{Binary to Unary}
    There is a 
      constant depth
        circuit family where circuit $C_n$ 
        takes as input some word  $w\in \{0,1\}^n$ 
        and outputs $1^x$ where $x$ is the positive integer encoded in the 
        first $\lceil\log_2 n \rceil$ bits of $w$~\cite{CSV1984}. 
    It can be shown that this circuit family is  
      $\DLOGTIME$ uniform and so this conversion from short binary strings to unary is in $\FAC^0$. 

    \paragraph{Unary to Binary}
    There is a 
      constant depth
        circuit family where circuit $C_n$ 
        takes as input some word $w = 0^{n-x} 1^x$ where $0 \leq x \leq n $,
        and outputs the binary encoding of $x$~\cite{CSV1984}. 
    It can be shown that this circuit family is $\DLOGTIME$ uniform and so 
      unary to binary conversion is in $\FAC^0$.
 
    \subsection{Configuration graphs}
    \label{sec:config_graphs}
    \begin{definition}[Configuration Graph]
     Let $w \in \{0,1\}^*$ be the input to a halting Turing machine~$M$. 
      The \emph{configuration graph} $C_{M,w}$ is a directed acyclic graph 
        where each vertex encodes a configuration of $M$ 
        on inputs of length $|w|$.  
      The graph $C_{M,w}$ has a directed
          edge from a vertex~$c$ to a vertex $c'$ if the configuration encoded by $c'$ 
          can be reached from the configuration encoded by~$c$ in one
          step via $M$'s transition function.
    \end{definition}

    A configuration graph $C_{M,w}$ has the property that there is a directed
      path from the vertex $c_s$ representing the start configuration, to the accept vertex $c_a$ 
      if an only if $M$ accepts input~$w$. 
    Lemma~\ref{lem:config_in_AC0} follows from~\cite{Imm1987,Imm1999x}. 
    \begin{lemma}
      \label{lem:config_in_AC0}
      Given the binary encoding of 
        a Turing machine $M$, which has state set $Q$ and
        has an $\FAC^0$ computable space bound $s = \bigO{\log |w|}$,
        and given an input $w$, 
        the configuration graph $C_{M,w}$  
        is computable in $\uniform{\DLOGTIME}{\FAC^0}$
        and is  of size $\bigO{2^s |w| |Q|}$.
    \end{lemma}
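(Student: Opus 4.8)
The plan is to have the circuit family, on input $w$ with $n=|w|$, output the $N\times N$ adjacency matrix of $C_{M,w}$, where $N$ is the number of configurations; the output gate for matrix entry $(c,c')$ is indexed by an ordered pair of \emph{candidate} configuration encodings. First I would fix a binary encoding of configurations. A configuration of $M$ running in space $s=\bigO{\log n}$ on a length-$n$ input is determined by its state (an element of $Q$), its input-head position (in $\{0,\dots,n\}$), its work-head position (in $\{0,\dots,s\}$), and the contents of the work tape (a string of length $s$ over the work alphabet $\Gamma$). Encoding each field in binary and concatenating gives a string of length $\bigO{s}+\lceil\log(n{+}1)\rceil=\bigO{\log n}$, and the number of such strings is at most $|Q|\cdot(n{+}1)\cdot(s{+}1)\cdot|\Gamma|^{s}=\bigO{2^{s}\,|w|\,|Q|}$ (folding the $(s{+}1)$ factor and $\log|\Gamma|$ into the exponent as in the statement of the lemma); since $s=\bigO{\log n}$ this is polynomial in $n$, so the matrix has polynomially many entries and the stated size bound holds.

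The core observation is that the edge predicate is \emph{local}. The bit for the pair $(c,c')$ should be $1$ iff $c$ and $c'$ are both well-formed configuration encodings and $c'$ is the one-step successor of $c$ under $M$'s transition function. Well-formedness is a conjunction of range checks on the fields of $c$ and $c'$. To test the successor relation the circuit decodes from $c$ the state $q$, the input-head position $i$, the work-head position $j$, and the work symbol $\gamma$ currently scanned --- the last being a fixed-width lookup into the work-tape field of $c$ selected by the $\bigO{\log\log n}$-bit index $j$, i.e.\ a small multiplexer. It then reads the scanned input symbol $w_i$; this is the only data-dependent access to $w$, and it is an $n$-to-$1$ multiplexer controlled by the $\bigO{\log n}$-bit index $i$, a standard constant-depth polynomial-size gadget. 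With $(q,w_i,\gamma)$ available, the circuit consults the finite, hard-wired transition table of $M$ to obtain the successor state, the new work symbol, and the two head moves, assembles the unique successor configuration, and compares it bitwise with the decoding of $c'$. The whole test is a Boolean combination of equality tests between $\bigO{\log n}$-bit strings together with one multiplexer, hence is computed by a constant-depth polynomial-size circuit; equivalently, with the fields laid out bit by bit, the edge relation of $C_{M,w}$ is first-order definable from $w$ in the sense of~\cite{Imm1987,Imm1999x}, and so by~\cite{BIS1990p} is decided by a $\DLOGTIME$-uniform $\FAC^0$ circuit.

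It remains to check $\DLOGTIME$-uniformity (Definition~\ref{def:dlogtime}). Label each gate of $C_n$ by a pair of configuration encodings, fixing which matrix entry it belongs to, together with a bounded-length address inside the fixed-depth sub-circuit that evaluates the edge predicate; all labels then have length $\bigO{\log n}$. The direct-connection predicate asks, given two such labels, whether one gate feeds the other in the prescribed position; answering it reduces to extracting and comparing fixed-position $\bigO{\log n}$-bit fields of the encodings, a constant-size table lookup for $M$, and following the fixed wiring of the multiplexers and of the comparison tree --- all computable in time linear in the label length. Since $\FAC^0$ is closed under the constant-depth composition used above (range checks, multiplexers, the transition-table lookup, equality tests) and the wiring is $\DLOGTIME$, the family is $\uniform{\DLOGTIME}{\FAC^0}$. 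The construction is moreover uniform in the description of $M$: one only needs the number of states and the ($\FAC^0$-computable) space bound $s$ so that the matrix dimension $N$ and the field widths are known, so $M$ (or its space bound) may itself be supplied by a preceding $\FAC^0$ computation, which is how the lemma is used later. The one point needing care is making the direct-connection predicate run in genuinely linear time --- in particular the field parsing and the wiring of the input multiplexer --- but this is routine bookkeeping given that every quantity involved is $\bigO{\log n}$ bits wide, so I expect no substantive obstacle.
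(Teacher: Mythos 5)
Your proposal is correct in substance, but be aware that the paper does not actually prove Lemma~\ref{lem:config_in_AC0}: it states that the lemma follows from~\cite{Imm1987,Imm1999x}, i.e.\ from the descriptive-complexity fact that the one-step (``next configuration'') relation of a logspace-bounded machine is definable by a first-order/local predicate, combined with the equivalence of first-order definability and $\DLOGTIME$-uniform $\AC^0$ from~\cite{BIS1990p}. What you wrote out is essentially the standard argument hiding behind that citation (and you even invoke the same equivalence as a shortcut): an $\bigO{\log n}$-bit encoding of configurations, the locality of the edge predicate (state, scanned work symbol, scanned input bit $w_i$ obtained by an $n$-to-$1$ multiplexer, hard-wired transition table, bitwise comparison), and a direct-connection predicate decidable in time linear in the $\bigO{\log n}$-bit gate labels as Definition~\ref{def:dlogtime} requires. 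Your explicit construction buys self-containedness; the paper's citation buys brevity, and in its applications $M$ is always a fixed machine per circuit family, which matches your remark about hard-wiring the transition table. Two small repairs: first, the machines the lemma is applied to are nondeterministic (or co-nondeterministic) logspace machines, so there is no ``unique successor configuration''; the circuit should check whether the decoding of $c'$ agrees with \emph{one of} the constantly many successors permitted by the transition relation for $(q,w_i,\gamma)$, an \ORgate{} over a constant-size set, which changes nothing else. Second, read the ``size'' bound as counting configurations, so that $|\Gamma|^{s}(s{+}1)=2^{\bigO{s}}$ is absorbed into the $2^{s}$ term as you note; your $N\times N$ adjacency-matrix output is larger than the graph itself but still of polynomial size, hence harmless for the $\FAC^0$ claim.
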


  \section{Languages reducible to $\tally\NL$}
  \label{sec:tallynl}

  In this work we consider the class $\tally\NL$ as well as classes  $\AC^0$
  many-one, disjunctive truth-table, conjunctive truth-table, truth-table, and
  Turing reducible to $\tally\NL$. Their containment relationships are shown in
  Figure~\ref{fig:summaryofallresults}. We prove the following for
  completeness.

  \begin{figure}%
    \centering
    \includegraphics{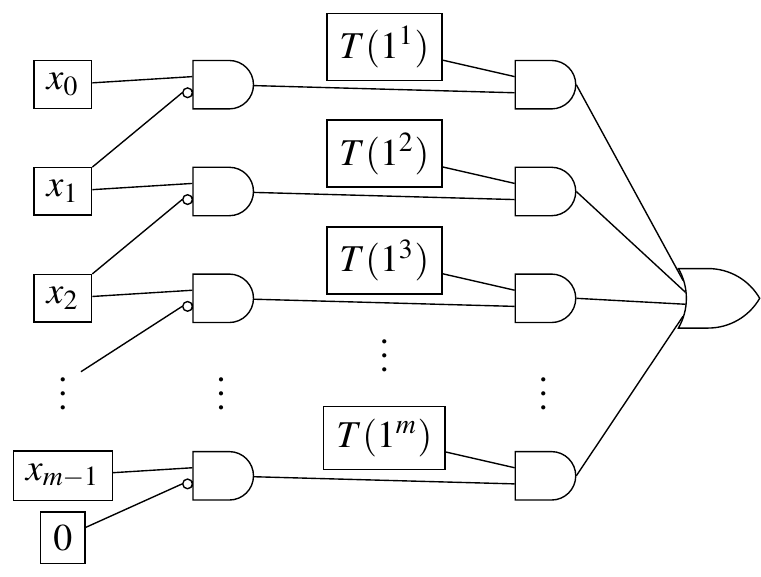}
    \caption{A gadget that simulates a single tally oracle gate. Gates of the
      form $T(1^i)$ are constant gates that simulate a Turing machine $T$:
      where $T(1^i)=1$ if the Turing machine $T$ accepts input $0^{m-i}1^i$,
      and $T(1^i)=0$ otherwise.}
    \label{fig:tally-gate-gadget}
  \end{figure}

  \begin{lemma}
    \label{lem:T_reduce_tallyNL_subneq_NL}
    $\Reducible{T}{\FAC^0}{\tally\NL} \subsetneq \NL$ 
  \end{lemma}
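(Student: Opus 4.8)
The plan is to prove the two inclusions $\Reducible{T}{\FAC^0}{\tally\NL}\subseteq\NL$ and $\NL\not\subseteq\Reducible{T}{\FAC^0}{\tally\NL}$ separately. The first is essentially routine; the content of the lemma is the second, and I would obtain it by first establishing the stronger statement $\Reducible{T}{\FAC^0}{\tally\NL}\subseteq$ non-uniform-$\AC^0$ and then exhibiting a language, namely $\Parity$, that lies in $\NL$ but --- by the classical constant-depth circuit lower bound --- not in non-uniform-$\AC^0$.

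For $\Reducible{T}{\FAC^0}{\tally\NL}\subseteq\NL$ I would give a direct logarithmic-space simulation. Let $A\leq^{\FAC^0}_{T}B$ via an $\FAC^0$ oracle circuit family $\{M_n\}$ with $B\in\tally\NL$, and let $p(n)$ bound the size of $M_n$. Evaluating an $\FAC^0$ circuit gate by gate is a standard logspace task (navigating $M_n$ with its $\DLOGTIME$-uniformity machine), and since the recursion that evaluates a gate from its children has depth at most the circuit depth --- a constant --- only $\bigO{\log n}$ bits of bookkeeping are needed. The one new ingredient is an oracle gate: its $n'\le p(n)$ wires carry a string $0^{n'-i}1^{i}$, so the query is really just the number $i\le p(n)$, which fits in $\bigO{\log n}$ bits; having computed $i$ one decides ``$1^{i}\in B$'' by running the $\NL$ machine for $B$ on the input $1^{i}$ (of length at most $p(n)$, hence still $\NL$ in $n$), guessing the answer bit and \emph{verifying} it on the nondeterministic branch --- a correct ``yes'' is witnessed by an accepting run of that machine, and a correct ``no'' by an accepting run of an $\NL$ machine for the complement of $B$ (which exists since $\NL=\co\NL$; cf.\ Lemma~\ref{lem:tallynl_eq_tallyconl}) --- so every wrong guess is pruned. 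The whole simulation is nondeterministic and uses $\bigO{\log n}$ space, giving $A\in\NL$.

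For the strict separation I would prove $\Reducible{T}{\FAC^0}{\tally\NL}\subseteq$ non-uniform-$\AC^0$. Fix $A$, $\{M_n\}$, $B$, $p(n)$ as above and set $\beta_n\in\{0,1\}^{p(n)+1}$ by $\beta_n[k]=\chi_B(1^{k})$; this string depends only on $n$ and will be supplied as non-uniform advice. By induction on the number of oracle gates on the longest path from the input gates to a gate $g$ of $M_n$ (a quantity bounded by the depth of $M_n$, hence constant), I claim the value of $g$ on input $w$ is computed by a polynomial-size constant-depth circuit over $\ANDgate$, $\ORgate$ and $\NOTgate$ that additionally reads the bits of $\beta_n$. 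Input, constant, $\ANDgate$, $\ORgate$ and $\NOTgate$ gates are immediate. For an oracle gate $g$ with ordered children $h_1,\dots,h_{n'}$, well-formedness of the oracle circuit guarantees that for every $w$ the bits $(h_1(w),\dots,h_{n'}(w))$ form the sorted string $0^{n'-i(w)}1^{i(w)}$; hence $[\,i(w)\ge k\,]$ is just the single bit $h_{n'-k+1}(w)$, each $[\,i(w)=k\,]$ is a constant-depth function of at most two children outputs, and $g$'s output equals $\bigvee_{0\le k\le n'}\beta_n[k]\wedge[\,i(w)=k\,]$. Substituting the inductive circuits for the $h_j$ yields a polynomial-fan-in $\ORgate$ of constant-depth advice circuits --- again constant depth and polynomial size --- and since oracle gates are nested only constantly often the circuit built for the output gate of $M_n$ has constant depth and polynomial size. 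Thus the $n$-th slice of $A$ is decided in non-uniform-$\AC^0$. (The sortedness promise on oracle-gate wires is what makes this work: without it an oracle gate could realise an arbitrary symmetric function --- in particular parity --- of its inputs, and both this containment and the lemma would fail.)

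Finally, $\Parity$ lies in $\NL$ (it is decidable even in deterministic logarithmic space), whereas $\Parity\notin$ non-uniform-$\AC^0$ by the classical Furst--Saxe--Sipser/Ajtai/H\aa stad lower bound; hence $\Parity\in\NL\setminus\Reducible{T}{\FAC^0}{\tally\NL}$, which together with the inclusion above gives the strict containment. I expect the main obstacle to be the inductive elimination of the oracle gates in the third paragraph: one has to confirm that the sortedness promise really does let each tally oracle gate be replaced by an $\AC^0$-with-advice gadget and that constant depth and polynomial size survive the constantly many nested replacements. The secondary subtlety --- keeping the $\NL$-simulation within logarithmic space despite the nested oracle calls --- is dealt with by the constant circuit depth together with the succinctness (length $\bigO{\log n}$) of tally queries.
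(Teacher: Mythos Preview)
Your proposal is correct and follows essentially the same route as the paper: both establish $\subseteq\NL$ by a direct logspace simulation of the oracle circuit, and both obtain strictness by proving $\Reducible{T}{\FAC^0}{\tally\NL}\subseteq\textrm{non-uniform-}\AC^0$ (via hard-coding the tally oracle answers into a constant-depth gadget that selects the $k$ with $i(w)=k$ and outputs $\chi_B(1^k)$) and then invoking $\Parity\in\NL\setminus\textrm{non-uniform-}\AC^0$. Your treatment is simply more explicit---you spell out the guess-and-verify use of $\NL=\co\NL$ for the oracle calls and frame the gadget replacement as an induction on oracle-nesting depth---whereas the paper presents the gadget pictorially and handles the $\NL$ containment in one sentence.
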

  \begin{proof} ($\subseteq$) 
    Let $L \in \Reducible{T}{\FAC^0}{\tally\NL}$. 
    Since the circuit and the oracles compute
    functions in $\NL$, there is a non-deterministic logspace Turing machine
    that computes the composition of these functions. 

    ($\neq$) $\Parity \in \NL$. We know that
      $\Parity \not\in \textrm{non-uniform-}\AC^0$~\cite{FSS1984p} and
      that $\tally \subseteq \textrm{non-uniform-}\AC^0$,
      hence it is sufficient to prove that 
      $\Reducible{T}{\FAC^0}{\tally\NL} \subseteq \textrm{non-uniform-}\AC^0$.

    Let $L \in \Reducible{T}{\FAC^0}{\tally\NL}$.
    Consider a family of circuits $\mathcal{C}_L$
    that recognizes $L$ and makes use of the Turing machine~$\tallyMachine$ as the tally %
    oracle. Let $w \in \{0,1\}^\ast$, and consider the circuit $C_{|w|} \in
    \mathcal{C}_L$ that decides whether or not $w \in L$. There is some number
    $k \in \mathbb{N}$ of oracle gates in $C_{|w|}$. The $i$th
    such oracle gate, $i \in \{1,2, \ldots, k \}$, takes one of $m+1$ inputs
    where $m$ is the number of wires into the gate (recall that inputs to the
    gate are of the form $0^{m-j}1^j$).  We (non-uniformly) replace oracle
    gate $i$ with the gadget shown in Figure~\ref{fig:tally-gate-gadget}. 
    This gadget encodes tally machine answers as constants.   
        The replacement can be done knowing $|w|$ (and not knowing $w$). We
    replace all $k$ tally oracle gates with this gadget to get a new circuit that is
      a constant factor (i.e.\ 5 times) deeper than $C_{|w|}$ and polynomially (in $|w|$)
    larger. Applying this transformation to the entire family $\mathcal{C}$
    results in a non-uniform $\AC^0$ circuit family that recognizes $L$.
  \end{proof}

    The same proof gives 
         $\Reducible{T}{\FAC^0}{\tally} 
        \subseteq
      \textrm{non-uniform-}\AC^0$ and hence $\Reducible{T}{\FAC^0}{\tally}
        \neq
      \NL$, which holds for
    $\tally$ as opposed to $\tally\NL$, and also for Turing reductions that are
    uniform-$\FAC^0$, or non-uniform-$\FAC^0$.

  \section{Uniform $\ORgate$ circuits}
  \label{sec:or}
  In this section we consider the relationship between uniform polynomial-size
  $\ORgate$ circuits and $\tally\NL$. We also consider the classes of
  languages reducible to these classes by suitably weak reductions.
  We begin with a $\length\NL$ lower bound on uniform polynomial size $\ORgate$ circuits.
  For this lower bound we consider $\length\NL$ rather than $\tally\NL$ because $\ORgate$ circuits act on binary strings and $\length\NL$ is a binary analogue of $\tally\NL$ (with almost the same proof we get an analogous $\tally\NL$ lower bound for $ \uniformOR{\FAC^0}$ if we  restrict to inputs from $\{ 1 \}^*$).

  \begin{theorem}
    $\length\NL \subsetneq \uniformOR{\FAC^0}$.
    \label{thm:tallyNL-in-uni-OR}
  \end{theorem}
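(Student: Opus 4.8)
My plan is to prove the inclusion $\length\NL \subseteq \uniformOR{\FAC^0}$ by realising a nondeterministic logspace computation inside the connectivity structure of an $\ORgate$ circuit --- exploiting that $\ORgate$ gates propagate $1$'s exactly along directed paths, and that configuration graphs of logspace machines are $\FAC^0$-constructible (Lemma~\ref{lem:config_in_AC0}) --- and then to get strictness from a trivial $\ORgate$ circuit deciding a language that is not length-encoded.

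For the inclusion I would take $L \in \length\NL$, fix $S \subseteq \nats$ with $L = \{ w : |w| \in S \}$, and fix a nondeterministic logspace machine $M$ deciding $L$; after standard modifications I may assume $M$ halts on every input and, just before accepting, enters a unique canonical accepting configuration $c_a$. Note that for every $n$, $M$ accepts all of $\{0,1\}^n$ if $n \in S$ and rejects all of it otherwise. The uniformity function, on input $1^n$, would first build --- by Lemma~\ref{lem:config_in_AC0}, in $\uniform{\DLOGTIME}{\FAC^0}$ --- the configuration graph $C_{M,0^n}$ of $M$ on the all-zeros length-$n$ input (the choice of $0^n$ is harmless since acceptance depends only on length); this is a DAG of size $\mathrm{poly}(n)$ with start vertex $c_s$ and accept vertex $c_a$. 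Then it would output the circuit $C_n$ that has one $\ORgate$ gate $g_c$ per vertex $c$, plus $n$ input gates wired to nothing, plus a constant-$1$ gate: for each edge $c \to c'$ of $C_{M,0^n}$ make $g_c$ a child of $g_{c'}$, make the constant-$1$ gate an extra child of $g_{c_s}$, and designate $g_{c_a}$ as the output gate. This description is acyclic, of polynomial size, and $\FAC^0$-computable from $1^n$.

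For correctness I would argue that, since no input gate feeds any gate, the values of the $g_c$ are independent of the actual input, and a straightforward induction along the DAG shows $g_c = 1$ iff $c$ is reachable from $c_s$ in $C_{M,0^n}$ (base case: $g_{c_s} = 1$ because of its constant-$1$ child; inductive step: an $\ORgate$ gate $g_{c'}$ is $1$ iff some $g_c$ with $c \to c'$ is). In particular the output $g_{c_a}$ is $1$ iff there is a directed path from $c_s$ to $c_a$, i.e.\ iff $M$ accepts, i.e.\ iff $n \in S$; so $\{C_n\}$ decides $L$ and $L \in \uniformOR{\FAC^0}$.

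For strictness I would note that $L_0 = \{ w \in \{0,1\}^* : w \neq 0^{|w|} \}$ is in $\uniformOR{\FAC^0}$ via the trivially $\DLOGTIME$-uniform family whose circuit $C_n$ is a single $\ORgate$ output gate fed by all $n$ input variables, whereas $L_0 \notin \length$ (e.g.\ $10 \in L_0$ but $00 \notin L_0$), hence $L_0 \notin \length\NL$; combined with the inclusion this yields $\length\NL \subsetneq \uniformOR{\FAC^0}$. The hard part is the inclusion, and the key point is that one must \emph{not} have the uniformity machine decide ``$n \in S$'' directly (that is $\NL$-hard, not $\FAC^0$): the nondeterministic search has to be pushed into the circuit's reachability structure, which is exactly what Lemma~\ref{lem:config_in_AC0} together with OR-propagation lets us do for free.
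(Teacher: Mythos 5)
Your proposal is correct and follows essentially the same route as the paper: build the configuration graph of the logspace machine on a canonical length-$n$ input via Lemma~\ref{lem:config_in_AC0}, turn it into an $\ORgate$ circuit with the start configuration as a constant~$1$ and dummy input gates, so that reachability (hence acceptance) is exactly what the output gate computes, and then separate with the language ``$w$ contains a $1$'' (your $L_0$ is that same language). The only cosmetic difference is that the paper wires a constant~$0$ into every $\ORgate$ gate (and collects all accepting configurations into a fresh output gate) so that no $\ORgate$ gate ends up as a fan-in-$0$ source, a technicality your construction should also handle but which is a trivial fix.
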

  \begin{proof}
    Let $L \in \length\NL$.
    $L$ is accepted by a non-deterministic logspace Turing machine~$\tallyMachine$,
      for which one or more computation paths are accepting exactly
      for those words $w \in L \subseteq \{ 0, 1 \}^*$.
    The configuration graph $C_{\tallyMachine,w}$ for~$\tallyMachine$ on input
    $w\in\{0,1\}^*$ 
      is $\FAC^0$ computable from $\tallyMachine$ and $w$ (see Lemma~\ref{lem:config_in_AC0}).
    We construct the configuration graph assuming that its input $w$ is $1^{|w|}$ (recall
    that if $w \in L$ then all words in $\{0,1\}^{|w|}$ are in $L$).
    We modify the graph $C_{\tallyMachine,w}$ to create an $\ORgate$ circuit as follows.
    Each edge becomes a wire and each vertex becomes an $\ORgate$ gate, 
      except the start vertex (representing the initial configuration of
      $\tallyMachine$ on input $1^{|w|}$)
      which becomes a constant 1 gate. We add $|w|$ ``dummy'' input gates that are not wired to anything.  
    We add a new $\ORgate$ gate that is the circuit's output gate,
      and a constant 0 is wired into the every $\ORgate$  gate in the circuit.
    All accept-vertices (representing the accepting configurations) are wired into this output gate.
    If $w \in L$ the circuit accepts since there is a path from 1 to the output gate.
    If $w \not\in L$ the circuit rejects since there is no path from 1 to the output gate.

    If we apply this transformation to the set of all configurations graphs 
      for the fixed machine $\tallyMachine$ over all inputs $w \in \{ 1\}^*$,
      we get a circuit family $\mathcal{C}$.
    Members of such a  circuit family are computable by an $\FAC^0$ function
    $f_\tallyMachine: \{1\}^* \rightarrow \mathcal{C}$. %
    
    Consider the language $L = \{ w \mid w \textrm{ has at least one 1} \} $ which is easily seen to be in $\uniformOR{\FAC^0}$ but not in $\length\NL$, giving the required inequality for strict containment.
  \end{proof}

  Next we show that the languages accepted by uniform polynomial-size  $\ORgate$
  circuits are strictly contained in those disjunctive truth-table reducible
  to $\tally\NL$.

  \begin{theorem}
    \label{thm:uniformOR_subneq_dttTallyNL}
    $\uniformOR{\FAC^0} \subsetneq \Reducible{dtt}{\FAC^0}{\tally\NL}$
  \end{theorem}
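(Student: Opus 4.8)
The plan is to prove the inclusion first and then the strictness. For the inclusion $\uniformOR{\FAC^0} \subseteq \Reducible{dtt}{\FAC^0}{\tally\NL}$, let $\mathcal{C} = \{C_n\}$ be an $\FAC^0$-uniform family of polynomial-size $\ORgate$ circuits deciding a language $L$, with uniformity function $\former : \{1\}^* \to \mathcal{C}$. On input $w$ of length $n$, the circuit $C_n$ outputs $1$ iff some input gate carrying a variable $w_j = 1$ is connected by a directed path to the output gate. So I would design the $\FAC^0$ function $\tau$ to produce, for each input position $j \in \{0,\dots,n-1\}$ with $w_j = 1$, a query string that encodes the pair $\langle$ description of $C_n$ (equivalently $1^n$, since $\former$ is $\FAC^0$), $j \rangle$ suitably padded into unary (using the binary-to-unary and pairing functions of Section~\ref{sec:facfunc}); the oracle, a single tally (or $\length$) language, should answer $1$ iff input gate $j$ reaches the output gate in $C_n$. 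Since $C_n$ has polynomial size and its description is $\FAC^0$-computable from $1^n$, reachability in this implicitly-given graph is an $\NL$ predicate on a unary-encoded input, hence the oracle language lies in $\tally\NL$ (or $\length\NL$; use Theorem~\ref{thm:tallyNL-in-uni-OR}'s framework to land inside the tally world). Then $w \in L$ iff at least one query is a yes-instance, which is exactly a disjunctive truth-table reduction: $\sigma(w) = \bigvee_j \chi_B(a_j)$.

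For strictness I would exhibit a language in $\Reducible{dtt}{\FAC^0}{\tally\NL}$ that provably lies outside $\uniformOR{\FAC^0}$. The natural candidate is something like a suitably padded version of $\Parity$, or more simply a language of the form ``$w$ has a $1$ in a position $j$ such that $j \in S$'' for a cleverly chosen $S \in \tally\NL$ whose complement-style behavior cannot be captured: the point is that an $\ORgate$ circuit on binary inputs can only compute an $\ORgate$ of a subset of the literals $w_j$ (never negated literals, never an $\ANDgate$), whereas a dtt reduction to $\tally\NL$ can, for instance, first test a global property of $w$ and only then consult the oracle. A clean choice: the language $\{w : w_0 = 0\}$ — or better, something requiring a negation combined with oracle access — is dtt-reducible to $\tally\NL$ but is not a monotone disjunction of input literals, so no $\ORgate$ circuit family decides it. I would verify that such a witness language is not in $\uniformOR{\FAC^0}$ by the semantic observation in Lemma~\ref{lem:OR_noteq_AND}: an $\ORgate$ circuit is monotone and its output is an $\ORgate$ over a set of positive literals, so it cannot reject $1^n$ while accepting some $w \preceq 1^n$, nor can it depend non-monotonically on any bit.

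The main obstacle, I expect, is the inclusion direction — specifically, arranging the reduction so the oracle language is genuinely unary (tally) rather than merely ``$\NL$ on a short binary input.'' The query must fold together $1^n$ (to recover $C_n$ via $\former$) and the position index $j$ into a single unary string, and the reachability computation must be $\NL$ given only that unary string; this uses that the size of $C_n$ is polynomial in $n$ so a unary padding of polynomial length suffices, and that $\former \in \FAC^0 \subseteq \FNL$ so the oracle machine can reconstruct the circuit description internally. Getting the encoding to be both $\FAC^0$-computable on the query side and $\NL$-decodable on the oracle side, while staying within the $\tally$ (or $\length$) alphabet, is the delicate bookkeeping; the strictness direction, by contrast, should follow quickly from the structural rigidity of $\ORgate$ circuits once an appropriate witness language is pinned down.
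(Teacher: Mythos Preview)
Your inclusion argument is essentially the paper's construction, just organized differently: the paper proves $\uniformOR{\FAC^0} \subseteq \Reducible{m}{\FAC^0}{\uniformOR{\FAC^0}}$ trivially and then invokes Theorem~\ref{lem:theORcluster_eq} (whose proof, via Lemma~\ref{lem:dttReduceUniOR_subset_dttTallyNL}, carries exactly the unary-pairing-plus-reachability idea you describe), whereas you unfold that special case directly. Either route works; the paper's detour has the advantage that the heavy lifting is reused for the full cluster equality, while yours is self-contained for this one theorem.

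There is one technical gap in your inclusion sketch: you write that $C_n$ outputs $1$ iff some input gate with $w_j=1$ reaches the output gate, but an $\ORgate$ circuit may also contain \emph{constant}~$1$ gates, and if any of those reaches the output the circuit accepts regardless of $w$. The paper handles this by adding one extra query per circuit (the $i=|x_l|$ case in Equation~\eqref{eq:tau_or}) asking whether some constant~$1$ gate reaches the output; you should do the same, e.g.\ emit one additional unary query $1^{\langle n,n\rangle}$ whose oracle semantics is ``some constant~$1$ gate reaches the output of $C_n$''. Without it your reduction is wrong on circuits that accept unconditionally.

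For strictness, your witness $\{w : w_0 = 0\}$ is fine: it lies in $\AC^0 \subseteq \Reducible{m}{\FAC^0}{\tally\NL} \subseteq \Reducible{dtt}{\FAC^0}{\tally\NL}$, yet it is non-monotone and hence not computed by any $\ORgate$ circuit. The paper instead points to languages in $\AC^0 \cap \textrm{non-uniform-}\CAND$ (e.g.\ the all-ones language $\{1^n : n \in \mathbb{N}\}$), which are monotone but still not $\ORgate$-computable; either witness suffices, and your monotonicity observation is the cleanest way to justify it.
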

  \begin{proof}
    It is trivially the case that 
      $\uniformOR{\FAC^0} \subseteq \Reducible{m}{\FAC^0}{\uniformOR{\FAC^0}}$.
    Then, by applying Theorem~\ref{lem:theORcluster_eq} (stated and proved below) we get 
      that $\uniformOR{\FAC^0} \subseteq \Reducible{dtt}{\FAC^0}{\tally\NL} %
        = \linebreak \Reducible{m}{\FAC^0}{\uniformOR{\FAC^0}}$.
    To show strict containment, observe that $\Reducible{dtt}{\FAC^0}{\tally\NL}$
      contains languages in $\AC^0  \cap \, \textrm{non-uniform-}\CAND$ that
      are not accepted by any $\ORgate$ circuit family. 
  \end{proof}
    
  Since the previously stated upper and lower bounds on $\uniformOR{\FAC^0}$
  are both strict, it is natural to ask how $\uniformOR{\FAC^0}$ relates to the
  most obvious class that lies between these bounds, namely
  $\Reducible{m}{\FAC^0}{\tally\NL}$.
  In fact, we get an inequality:
  $\uniformOR{\FAC^0} \neq \Reducible{m}{\FAC^0}{\tally\NL}$, as
  $\Reducible{m}{\FAC^0}{\tally\NL}$ contains languages in $\AC^0  \cap\,  \textrm{non-uniform-}\CAND$
  that are not accepted by any $\ORgate$ circuit family. 

  The remainder of this section is concerned with the proof of
  Theorem~\ref{lem:theORcluster_eq}, which was used in
  Theorem~\ref{thm:uniformOR_subneq_dttTallyNL} to give an upper bound on
  $\uniformOR{\FAC^0}$, and shows the equivalence of three complexity
  classes. %

  \begin{theorem}
    \label{lem:theORcluster_eq}
    The following classes are equal:
    \begin{compactitem}
      \item $\Reducible{m}{\FAC^0}{\uniformOR{\FAC^0}}$
      \item $\Reducible{dtt}{\FAC^0}{\uniformOR{\FAC^0}}$ 
      \item $\Reducible{dtt}{\FAC^0}{\tally\NL}$
    \end{compactitem}
  \end{theorem}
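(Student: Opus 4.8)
# Proof Proposal for Theorem~\ref{lem:theORcluster_eq}

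The plan is to prove the three-way equality by establishing a cycle of inclusions among the classes $\Reducible{m}{\FAC^0}{\uniformOR{\FAC^0}}$, $\Reducible{dtt}{\FAC^0}{\uniformOR{\FAC^0}}$, and $\Reducible{dtt}{\FAC^0}{\tally\NL}$. The inclusion $\Reducible{m}{\FAC^0}{\uniformOR{\FAC^0}} \subseteq \Reducible{dtt}{\FAC^0}{\uniformOR{\FAC^0}}$ is immediate from the general implication diagram for reductions (a many-one reduction is a special case of a dtt reduction with a one-element query list). So the real work is in the other two inclusions, namely $\Reducible{dtt}{\FAC^0}{\uniformOR{\FAC^0}} \subseteq \Reducible{dtt}{\FAC^0}{\tally\NL}$ and $\Reducible{dtt}{\FAC^0}{\tally\NL} \subseteq \Reducible{m}{\FAC^0}{\uniformOR{\FAC^0}}$.

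For $\Reducible{dtt}{\FAC^0}{\uniformOR{\FAC^0}} \subseteq \Reducible{dtt}{\FAC^0}{\tally\NL}$, I would first argue that a single $\uniformOR{\FAC^0}$ circuit family is itself computing something dtt-reducible to $\tally\NL$: an $\ORgate$ circuit $C_n$ computes the OR over the subset of input variables $w_j$ whose input gate reaches the output gate, and membership of "gate $j$ reaches the output gate in $C_n$" is a reachability question on a $\DLOGTIME$-uniform graph, hence decidable in $\NL$ with only $n$ (the length, in unary) as the varying parameter — so this reachability predicate defines a tally (or $\length$) language in $\NL$. Thus $w \in L(C)$ iff $\bigvee_j (w_j = 1 \wedge \text{"$j$ reaches output in $C_{|w|}$"})$, which is a disjunctive truth-table reduction to that $\tally\NL$ language. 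Then I would compose: a dtt reduction to $\uniformOR{\FAC^0}$ followed by this dtt-to-$\tally\NL$ step; the composition of two $\FAC^0$ dtt reductions is again an $\FAC^0$ dtt reduction (the outer $\tau$ generates queries, each of which expands via the inner $\tau$ into a further list, and the overall acceptance condition is still a single big OR, because an OR of ORs is an OR), using the $\FAC^0$ pairing/encoding machinery from Section~\ref{sec:facfunc} to keep the query-generating function in $\FAC^0$.

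For the remaining inclusion $\Reducible{dtt}{\FAC^0}{\tally\NL} \subseteq \Reducible{m}{\FAC^0}{\uniformOR{\FAC^0}}$, the idea is that a dtt reduction to $\tally\NL$ produces, on input $w$, a list of unary queries $1^{i_1}, \ldots, 1^{i_\ell}$ and accepts iff at least one is in the $\tally\NL$ language $T$. Using Theorem~\ref{thm:tallyNL-in-uni-OR} (or rather its proof), the configuration graph of the $\NL$ machine for $T$ becomes an $\ORgate$ circuit, and I would build one combined $\ORgate$ circuit: the $\FAC^0$ many-one reduction, on input $w$, must produce (the encoding of) an $\ORgate$ circuit whose input gates are fed the bits of $w$, which internally recomputes the queries $i_1,\dots,i_\ell$ — but here is the subtlety, since an $\ORgate$ circuit cannot compute an arbitrary $\FAC^0$ function of $w$ internally. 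The resolution is that we do not need it to: the many-one reduction is itself an $\FAC^0$ function that has full access to $w$, so it outputs a circuit description in which the relevant input bits of $w$ have already been routed to the right places; what the $\ORgate$ circuit then does is merely take the OR over "input bit on, and this query index is accepted by $T$", and the $T$-acceptance part is hard-wired as the configuration-graph-derived $\ORgate$ subcircuit from Theorem~\ref{thm:tallyNL-in-uni-OR}. The main obstacle — and the step I expect to require the most care — is precisely this encoding argument: verifying that the map $w \mapsto \langle \text{circuit description} \rangle$ is genuinely $\DLOGTIME$-uniform $\FAC^0$ (in particular that the $\DLOGTIME$ direct-connection predicate of Definition~\ref{def:dlogtime} can answer wiring queries about the composed circuit in linear time), given that it must splice together $\ell$ copies of a configuration-graph circuit, index them, and connect the $j$-th copy's "constant 1" seed to the $j$-th bit of $w$. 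I would handle this by choosing the interleaving pairing function of Section~\ref{sec:facfunc} to encode gate labels as $\langle \text{copy index}, \text{local gate label} \rangle$ so that decoding, and hence the direct-connection check, stays linear-time, and then verifying the remaining bookkeeping is routine.
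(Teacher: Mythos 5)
Your overall decomposition (a cycle of inclusions, with the trivial step $\Reducible{m}{\FAC^0}{\uniformOR{\FAC^0}} \subseteq \Reducible{dtt}{\FAC^0}{\uniformOR{\FAC^0}}$, then $\subseteq \Reducible{dtt}{\FAC^0}{\tally\NL}$, then back) matches the paper, and your second inclusion is essentially the paper's Lemma~\ref{lem:dttReduceUniOR_subset_dttTallyNL}: unary queries encode pairs (bit position, query length) via the pairing function, and the $\tally\NL$ oracle recomputes the $\FAC^0$ uniformity function and does reachability from that input gate to the output gate. One small omission there: an $\ORgate$ circuit may accept because a \emph{constant 1} gate reaches the output, independently of the input bits, so the reduction must also emit a query covering that case (the paper uses the extra word $1^{\langle |x_l|,|x_l|\rangle}$ for exactly this purpose); as stated, your reduction would wrongly reject inputs of such circuits when no live input bit reaches the output.

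The genuine gap is in the last inclusion, $\Reducible{dtt}{\FAC^0}{\tally\NL} \subseteq \Reducible{m}{\FAC^0}{\uniformOR{\FAC^0}}$. You propose that the many-one reduction output, for each $w$, \emph{the encoding of an $\ORgate$ circuit} with the bits of $w$ spliced/hard-wired into it. But a many-one reduction to $\uniformOR{\FAC^0}$ must map $w$ to an \emph{input string} of a single fixed language $B \in \uniformOR{\FAC^0}$, i.e.\ a language decided by an $\ORgate$ circuit family whose member depends only on the input \emph{length}; the circuit cannot vary with $w$. A reduction whose output is a circuit description targets an $\ORgate$-circuit-value (reachability) language, which is not in $\uniformOR{\FAC^0}$ (a uniform $\ORgate$ family at each length can only compute an OR over a fixed subset of input positions, possibly with constants); indeed, letting the reduction build an input-dependent circuit is precisely the semi-uniform notion of Section~\ref{sec:uni-semi}, which by Lemma~\ref{lem:semi-mono-or-NL} characterizes $\NL$, strictly above the class you need to land in. The paper's Lemma~\ref{lem:dttReduceTallyNL_subset_mUniformOR} avoids this by exploiting that all queries are unary: the reduction outputs the characteristic bit-vector $r(w)$ with $r(w)_k = 1$ iff $1^k$ appears in the query tuple, and the \emph{fixed} uniform family's circuit on $1^m$ is built by the uniformity function alone from the $m$ configuration graphs of the $\NL$ machine for $T$ on $1^1,\ldots,1^m$ (as in Theorem~\ref{thm:tallyNL-in-uni-OR}), with the start vertex of the $k$-th graph serving as input gate $k$; then $C_m$ accepts $r(w)$ iff some queried $1^k$ is in $T$. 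Your construction would need to be reworked along these lines, since no amount of care about $\DLOGTIME$-uniform encodings of the spliced circuit can repair the fact that the circuit itself is not allowed to depend on $w$.
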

  This theorem is proven by the inclusion cycle %
  in Lemmas 
    \ref{lem:mReduceUniOR_subset_dttUniformOR}, 
    \ref{lem:dttReduceUniOR_subset_dttTallyNL}, and
    \ref{lem:dttReduceTallyNL_subset_mUniformOR} below.

  \begin{lemma}
    \label{lem:mReduceUniOR_subset_dttUniformOR}
    $\Reducible{m}{\FAC^0}{\uniformOR{\FAC^0}} \subseteq 
     \Reducible{dtt}{\FAC^0}{\uniformOR{\FAC^0}}$
  \end{lemma}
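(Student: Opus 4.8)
The statement $\Reducible{m}{\FAC^0}{\uniformOR{\FAC^0}} \subseteq \Reducible{dtt}{\FAC^0}{\uniformOR{\FAC^0}}$ should follow essentially from the general implication diagram for reductions: a many-one reduction is a special case of a disjunctive truth-table reduction (the one-query case, with $\sigma(w) = \chi_B(a_1)$). So the natural approach is simply to unwind the definitions and verify that the composed $\FAC^0$ machinery still lands in $\FAC^0$.

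Let me think about whether there's any subtlety. We have $A \leq^{\FAC^0}_m B$ for some $B \in \uniformOR{\FAC^0}$, via an $\FAC^0$ function $f$ with $w \in A \iff f(w) \in B$. To get a dtt reduction we need $\FAC^0$-computable functions $\tau$ and $\sigma$ where $\tau(w)$ is a list of strings and $\sigma(w) = \bigvee_i \chi_B(a_i)$, with $w \in A \iff \sigma(w)$ evaluates true. Take $\tau(w)$ to be the singleton list $(f(w))$ and $\sigma$ the single-variable disjunction (i.e. the identity on one bit). Then $\sigma(\chi_B(f(w))) = \chi_B(f(w))$, which is $1$ iff $f(w) \in B$ iff $w \in A$. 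The functions $\tau$ and $\sigma$ are trivially $\FAC^0$ given that $f$ is.

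The only thing worth a sentence of care: the excerpt's definition of a disjunctive truth-table reduction requires "at least one string generated by $\tau(w)$ is in $B$" as the acceptance condition, equivalently $\sigma(w) = \bigvee_{1 \leq i \leq \ell} \chi_B(a_i)$ — this is exactly what a one-element list gives, so there is no degenerate-empty-list issue. One could also just cite the implication diagram displayed in the excerpt (which already records $A \leq^{\C}_m B \implies A \leq^{\C}_{dtt} B$) applied with $\C = \FAC^0$, and then note that $\Reducible{dtt}{\FAC^0}{\uniformOR{\FAC^0}}$ is closed under this because the oracle class is held fixed. I do not expect any real obstacle here; this lemma is a definitional warm-up, and the substance of Theorem~\ref{lem:theORcluster_eq} lies in Lemmas~\ref{lem:dttReduceUniOR_subset_dttTallyNL} and~\ref{lem:dttReduceTallyNL_subset_mUniformOR}, where one must actually relate $\ORgate$ circuits to $\tally\NL$ and close the cycle. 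So my proposed proof is one or two lines: exhibit the singleton $\tau$ and the projection $\sigma$, or invoke the standard reduction-implication diagram with $\C = \FAC^0$ and the fixed oracle class $\uniformOR{\FAC^0}$.
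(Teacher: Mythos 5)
Your proposal is correct and matches the paper's approach: the paper's entire proof is the one-line observation that a dtt reduction generalizes a many-one reduction, which is exactly your singleton-list $\tau$ with $\sigma$ a one-variable disjunction. Your version simply spells out the details the paper leaves implicit.
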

  \begin{proof}
    The latter class is a generalization of the former.
  \end{proof}

  \begin{lemma}
    \label{lem:dttReduceUniOR_subset_dttTallyNL}
    $\Reducible{dtt}{\FAC^0}{\uniformOR{\FAC^0}} \subseteq 
     \Reducible{dtt}{\FAC^0}{\tally\NL}$
  \end{lemma}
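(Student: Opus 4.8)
The plan is to reduce this to the key structural fact that a uniform polynomial-size $\ORgate$ circuit is, on each input, just a big reachability test, and reachability over the sparse configuration-graph-like family produced by a $\DLOGTIME$-uniform circuit generator is a tally-$\NL$ question. First I would take a language $A \in \Reducible{dtt}{\FAC^0}{\uniformOR{\FAC^0}}$, so there are $\FAC^0$-computable functions $\tau$ and $\sigma$ witnessing the reduction, with $\sigma(w) = \bigvee_{1 \leq i \leq \ell} \chi_B(a_i)$ where $B \in \uniformOR{\FAC^0}$ and $\tau(w) = (a_1, \ldots, a_\ell)$. Let $\mathcal{D} = \{D_n\}$ be the $\FAC^0$-uniform $\ORgate$ circuit family deciding $B$, with $\FAC^0$ uniformity function $\former_{\mathcal{D}} : \{1\}^* \to \mathcal{D}$.

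The core step is to encode the whole dtt computation as a single $\FAC^0$ many-one reduction to one tally $\NL$ language $T$, and then to observe that a many-one reduction to a tally set trivially gives a dtt reduction to the same tally set (a list of length one). So it suffices to show $\Reducible{dtt}{\FAC^0}{\uniformOR{\FAC^0}} \subseteq \Reducible{m}{\FAC^0}{\tally\NL} \subseteq \Reducible{dtt}{\FAC^0}{\tally\NL}$; the second inclusion is immediate from the definitions (and from the implication diagram in the excerpt). For the first inclusion, I would define $T$ to be the tally language whose membership question is: "given (an encoding of) a tally input $1^k$ that packs together a circuit description $D_n$, a query string $a$ of length $n$, is there a directed path in $D_n$ from the input gate reading a $1$-bit of $a$, or from a constant-$1$ gate, to the output gate of $D_n$ (equivalently, does $D_n$ accept $a$)?" Since $D_n = \former_{\mathcal{D}}(1^n)$ is $\FAC^0$-computable from $1^n$, and $n, a$ can be recovered from the binary-to-unary/unary-to-binary machinery of Section~\ref{sec:facfunc} together with the interleaving pairing function, a non-deterministic logspace machine can, on unary input, reconstruct $D_n$, guess a path, and verify it edge-by-edge using the $\DLOGTIME$-uniformity predicate of Definition~\ref{def:dlogtime}; this puts $T \in \tally\NL$ (technically a length-encoded language, but $\length\NL \subseteq \tally\NL$-style arguments and the pairing make this fine, and one can pad to a genuine unary set). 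Then the $\FAC^0$ reduction from $A$ maps $w$ to the list obtained by running $\tau$ to get $a_1, \ldots, a_\ell$, converting each $|a_i|$ to unary, fetching $D_{|a_i|}$ via $\former_{\mathcal{D}}$, pairing each with $a_i$, and converting to the tally query; the disjunction $\bigvee_i \chi_T(\cdot)$ over these queries equals $\bigvee_i \chi_B(a_i) = \chi_A(w)$. All of these composed operations — $\tau$, unary/binary conversion, $\former_{\mathcal{D}}$, pairing — lie in $\FAC^0$, and $\FAC^0$ is closed under composition, so the reduction is $\FAC^0$-computable dtt.

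The main obstacle I expect is bookkeeping around uniformity and encoding lengths: one must be careful that the tally queries have length bounded polynomially in $|w|$ (so that the unary encoding is only polynomially long and the reduction stays in $\FAC^0$), and that a single fixed $\tally\NL$ language $T$ suffices for all of $A$ — i.e.\ that the dependence on the query length $n$ is folded into the tally string itself rather than into the choice of oracle. Since $\mathcal{D}$ is polynomial-size and $\former_{\mathcal{D}} \in \FAC^0$, the description $D_n$ has length polynomial in $n$, which is polynomial in $|w|$ because $\tau \in \FAC^0$ can only produce polynomially-long query strings; so the unary encoding of the packed $(D_n, a_i)$ has polynomial length, and the $\NL$ machine for $T$ runs in space logarithmic in that, hence logarithmic in $|w|$ overall — consistent with $\NL$. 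Once this length accounting is in place, the rest is routine application of the closure of $\FAC^0$ under composition and the $\NL = \co\NL$-free, purely reachability-based verification of circuit acceptance.
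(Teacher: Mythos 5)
There is a genuine gap, and it is exactly at the point you flag as ``bookkeeping'': the length of the tally queries. Your oracle queries pack the circuit description $D_n$ together with the entire query string $a_i$ into a single tally string $1^k$. A tally string of length $k$ carries only about $\log k$ bits of information, so to encode injectively a binary string of length $\mathrm{poly}(n)$ (or even just the $n$-bit query $a_i$) you need $k \in 2^{\Theta(\mathrm{poly}(n))}$ — the unary encoding is \emph{exponentially} long, not polynomially long as you assert. Consequently the map producing these queries is not $\FAC^0$-computable (its output length is exponential in $|w|$), and no fixed polynomial-space-bounded-in-$\log k$ argument rescues it. The information-theoretic point is unavoidable: a polynomially long tally query can communicate only $\bigO{\log |w|}$ bits to the oracle, so you cannot hand the oracle the query string itself. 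The paper's proof is designed around precisely this obstacle: it never encodes $a_i$ or $D_n$ into the tally string. Instead it splits each query $x_l$ into one tally query \emph{per bit}, encoding only the pair $\langle i, |x_l|\rangle$ (a number of magnitude $\bigO{|w|^2}$, hence a polynomial-length unary word) for bits that are $1$ (plus a special value for the constant-$1$-gate case, and a trivial rejected query ``$1$'' for $0$-bits). The $\tally\NL$ machine then recomputes the circuit $\former(1^n)$ from the uniformity function and tests reachability from input gate $i$ to the output gate; the $\ORgate$ semantics guarantee that $x_l$ is accepted iff some $1$-bit (or constant $1$ gate) reaches the output, so the per-bit splitting is exactly absorbed by the disjunctive truth-table structure. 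Your proposal contains the right ingredients (recomputing $D_n$ from $1^n$ inside the $\NL$ machine, reachability as the acceptance criterion), but keeping the whole query in the tally string defeats the construction.

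A secondary problem: your announced route via $\Reducible{dtt}{\FAC^0}{\uniformOR{\FAC^0}} \subseteq \Reducible{m}{\FAC^0}{\tally\NL}$ is not what your construction delivers — at the end you map $w$ to a \emph{list} of tally queries and take a disjunction of $\chi_T$ over them, which is a dtt reduction, not a many-one reduction. That inconsistency would not matter for the lemma (dtt is all that is needed), but the stronger intermediate claim should be dropped: combined with the paper's Theorem~\ref{lem:theORcluster_eq} it would give $\Reducible{dtt}{\FAC^0}{\tally\NL} \subseteq \Reducible{m}{\FAC^0}{\tally\NL}$, a collapse the paper explicitly treats as open (and, in light of Ko's separation in the polynomial-time setting, plausibly false).
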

  \begin{proof}
    Let $L \in \Reducible{dtt}{\FAC^0}{\uniformOR{\FAC^0}}$ with oracle
    language $L' \in \uniformOR{\FAC^0}$.  That is, there exists a function
    $\tau \in \FAC^0$ mapping from $\{0,1\}^*$ to the set of tuples of binary
    words where \emph{at least one} word in the tuple $\tau(w) = (x_1,
    x_2,\ldots,x_m)$ is in $L' $ iff $w \in L$.
    
    To show that any of the binary words $\tau(w) = (x_1, x_2,\ldots,x_m)$ are
    in~$L'$ (i.e.\ are accepted by the $\ORgate$ circuit family)  it is
    sufficient to show that there is a single bit 1 in a word from $\tau(w)$ such
    that the bit's assigned input gate is on a path to the output gate in the
    appropriate $\ORgate$ circuit (or that there is a constant 1 gate in some
    circuit that is on a path to the output gate).

    With this in mind, we define the function  $\tau' \in \FAC^0$, from
    $\{0,1\}^*$ to the set of tuples of unary words.  $\tau'(w) = (u_1,\ldots,
    u_{q(|w|)})$, where $q(|w|)$ is polynomial in $|w|$, such that for each
    bit $i$ in each word $x_l$ in $\tau(w)$, there is a unary word $u_{l,i}$ in
    $\tau'(w)$ that encodes both $|x_l|$ (i.e.\ the length of $x_l$) and $i$, specifically: 
    \begin{equation}
      \label{eq:tau_or}
      u_{l,i} = 
      \begin{cases}
        1^{\langle |x_l|, |x_l| \rangle} &
          \text{ if } i = |x_l|,\\
        1^{\langle i, |x_l| \rangle} & 
          \text{ if } 0 \leq i \leq |x_l|-1 \text{ and  bit } i \text{ of } x_l \text{ is } 1,\\
        1                            & 
          \text{ if } 0 \leq i \leq |x_l|-1 \text{ and  bit } i \text{ of } x_l \text{ is }  0.\\
      \end{cases}
    \end{equation}
    Here $u_{l,i}$ is the $(l,i)$th word in $\tau'(w)$, $x_l$ is the $l$th word
    in $\tau(w)$ and $\langle \cdot , \cdot \rangle$ denotes the pairing
    function in Section~\ref{sec:facfunc}.
    (Note that $0$ bits are not uniquely encoded; our construction does not require it.) 

    Now we argue that $\tau' \in \FAC^0$.
    Each of the $q(|w|)$ unary words in $\tau'(w)$ are computed independently and
    in parallel. %
    The $(l,i)$th unary word is
    computed as follows:
    First compute $x_l\in \{0,1\}^*$, which is the $l$th word in
    $\tau(w)$.  If the $i$th bit of $x_l$ is 0 then output the unary word 1.
    Otherwise compute the pairing $k = \langle i, |x_l| \rangle$
    (Section~\ref{sec:facfunc}), convert the binary number $k$ to unary to give $1^k$ 
    which is then output in an encoded form as $0^{z-k}1^{k}$ where $1 \leq k < z$, 
    $z = 2^{2\lceil \log |w| + 1 \rceil} \in \bigO{|w|^2}$.  The $(l,i)$th
    sub-circuit of $\tau'$ is composed of a constant number of $\FAC^0$
    computable routines from Section~\ref{sec:facfunc} along with the computation of $\tau$ which is, by hypothesis,
    in $\FAC^0$. The polynomial number $q(|w|)$ of such constant depth computations are done in parallel, hence $\tau' \in \FAC^0$. 

    Let $f \in \FAC^0$, $f: \{ 1\}^* \rightarrow \mathcal{C}$, be the
    uniformity function of the $\ORgate$-circuit family that recognises $L'$.
    We next define a non-deterministic Turing machine $\tallyMachine_f$ that takes unary
    input, and makes use of $f$. 
    The machine $\tallyMachine_f$ is defined to 
      reject on input word 1
      and accept
      input $1^{k}$ if $k>1$ and if the un-pairing (see Section~\ref{sec:facfunc})
      of the binary encoding of $k$ gives two binary
    numbers $n$ and $i$, such that \emph{there is a path} from the $i$th input
    gate to the output gate of circuit $f(1^n)$.
    $\tallyMachine_f$ also accepts if $i = n$ and there is a path
      from some constant 1 gate to the output gate of circuit $f(1^n)$.
    $\tallyMachine_f$ works as follows. $\tallyMachine_f$ computes the unary to binary conversion
    and the un-pairing routine in logspace (see Section~\ref{sec:facfunc}).
    By hypothesis, the uniformity function~$f $ is in $\FAC^0$
    so, by using the standard re-computation trick~\cite{AB2009x,Pap1993x}
    for logspace Turing machines,
       $\tallyMachine_f$ both computes~$f$ 
    and tests reachability from input gate $i$ to the output gate of
    circuit $f(1^n)$ in non-deterministic logspace.
    Hence,
      if there is a path from input gate $i$ (or some constant 1 gate) to the output gate
        then $\tallyMachine_f$ accepts,
      otherwise if no path is found 
        then $\tallyMachine_f$ rejects.
    Moreover, since $\tallyMachine_f$ uses space $O(\log k)$, the language it  accepts is in $\tally\NL$.

    $\tallyMachine_f$ will be our $\tally\NL$ oracle machine. 
    We now prove that for any $w \in \{0,1\}^*$,
        at least one word 
          in the tuple $\tau'(w)$
            is accepted by at least one of the $\tallyMachine_f$ oracle machines iff $w \in L$.
    If $w \in L$ then 
      there exists a word $x$ in the tuple $\tau(w)$ 
        with at least one
        bit with value 1 that is assigned to an input gate  
        that is on a path to the output gate in $\ORgate$ circuit $f(1^{|x|})$. 
    This means that the tuple of words  $\tau'(w)$
        contains at least one unary word that encodes $|x|$ and $i$, where $i$ is the bit position assigned to 1. 
    By the construction in the previous paragraph, this word in $\tau'(w)$ is accepted by~$\tallyMachine_f$. 

    If $w \notin L$ then by hypothesis there are no words in $\tau(w)$ that 
        are accepted by the uniform $\ORgate$ circuit family.
        Any $0$'s in words from $\tau(w)$ become encoded as the input 1 to $\tallyMachine_f$, which is rejected by $\tallyMachine_f$ since $k=1$.   
        While $\tau(w)$ may contain words $x$ with bits set to 1 (or constant bits set to~1),
        these bits are  assigned to input (or constant) gates that do not have a path to the output gate 
        in the circuit $f(1^{|x|})$.
     Hence, none of these words in $\tau'(w)$ will be accepted by the oracle calls to $\tallyMachine_f$. 
 
     Therefore $\tau'$ is a disjunctive truth-table reduction from $L$ to a language in $\tally\NL$. 
  \end{proof}

  \begin{lemma}
    \label{lem:dttReduceTallyNL_subset_mUniformOR}
     $\Reducible{dtt}{\FAC^0}{\tally\NL} \subseteq 
       \Reducible{m}{\FAC^0}{\uniformOR{\FAC^0}}$
  \end{lemma}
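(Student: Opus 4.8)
The plan is to show that any $\FAC^0$ disjunctive truth-table reduction to a $\tally\NL$ language can be turned into an $\FAC^0$ many-one reduction to a single uniform $\ORgate$ circuit family. So let $L \in \Reducible{dtt}{\FAC^0}{\tally\NL}$ via an $\FAC^0$ function $\tau$ producing, on input $w$, a tuple $(u_1,\dots,u_{q(|w|)})$ of unary words, with $w \in L$ iff at least one $u_j$ is accepted by a fixed $\tally\NL$ oracle machine $\tallyMachine$. The strategy is: (i) build one $\ORgate$ circuit family $\mathcal{C}'$ whose $n$-input circuit $C'_n$ ``contains'', for every relevant unary length, the (reversed) configuration graph of $\tallyMachine$ with the start configuration replaced by a constant $1$ gate and the accepting configurations wired toward a single output $\ORgate$ gate — exactly as in the proof of Theorem~\ref{thm:tallyNL-in-uni-OR} — but now arranged so that a designated \emph{input gate} of $C'_n$ feeds the subcircuit corresponding to the unary word of length encoded by that input gate's index; (ii) then define the $\FAC^0$ many-one reduction $g$ that, on input $w$, computes $\tau(w)$, converts each $u_j$ to its length, and produces a binary string of the appropriate length $n$ that sets to $1$ exactly those input gates of $C'_n$ corresponding to the lengths of the $u_j$'s (and $0$ elsewhere). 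Then $g(w)$ is accepted by $C'_n$ iff some $u_j$ is accepted by $\tallyMachine$ iff $w \in L$.

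The key steps, in order, are as follows. First I would fix the $\tally\NL$ machine $\tallyMachine$ and, using Lemma~\ref{lem:config_in_AC0}, note that for each $m$ its configuration graph on unary inputs of length up to $m$ is $\DLOGTIME$-uniform-$\FAC^0$ computable from $m$; reversing edges and collapsing the start/accept vertices as in Theorem~\ref{thm:tallyNL-in-uni-OR} turns each such graph into an $\ORgate$ circuit that outputs $1$ iff $1^m$ is accepted. Second, I would design the circuit $C'_n$: it has $n$ input gates, and input gate $i$ (for $i$ encoding, via the pairing/unpairing function of Section~\ref{sec:facfunc}, a length $\ell_i \le n$) is wired into the ``would-be start vertex'' of a private copy of the reversed configuration graph for inputs of length $\ell_i$; all the copies' accept-vertices feed a single global output $\ORgate$ gate, into which a constant $0$ is also wired (so an empty circuit rejects). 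Crucially the whole construction of $C'_n$ depends only on $n$, and the description is computable in $\DLOGTIME$ — this is where I invoke that unpairing, reachability-edge predicates of the fixed $\tallyMachine$, and indexing are all $\DLOGTIME$-checkable, giving $\mathcal{C}' \in \uniformOR{\FAC^0}$. Third, I would verify $g \in \FAC^0$: computing $\tau(w)$ is $\FAC^0$ by hypothesis, extracting $|u_j|$ from each $u_j = 0^{*}1^{*}$ is the unary-to-binary routine, pairing $|u_j|$ with itself (or whatever bookkeeping the circuit indexing uses) is $\FAC^0$, and writing a $1$ into the correct coordinate of an $n$-bit output string is $\FAC^0$; the polynomially many coordinates are produced in parallel, so $g \in \FAC^0$. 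Finally I would check correctness: $g(w)$ makes input gate $i$ equal to $1$ exactly when some $u_j$ has length $\ell_i$, in which case that $1$ propagates through the copy of the reversed configuration graph to the output gate iff $\tallyMachine$ accepts $1^{\ell_i}$; hence $C'_n(g(w)) = 1$ iff $\exists j:\ \tallyMachine$ accepts $u_j$ iff $w \in L$.

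The main obstacle I anticipate is getting the \emph{uniformity bookkeeping} exactly right: I need a single polynomial size bound $n \mapsto \mathrm{poly}(n)$ so that one input length of the reduction's output suffices to address all the unary oracle queries $\tau(w)$ could ever generate, and I need the index-to-length encoding (via the pairing function) to be simultaneously (a) injective enough that distinct query-lengths hit distinct input gates, (b) computable in $\FAC^0$ on the reduction side, and (c) decodable in $\DLOGTIME$ on the uniformity side so that $f_{\mathcal{C}'}(1^n) = C'_n$ really is a $\DLOGTIME$-uniform-$\FAC^0$ function. A secondary subtlety is that several $u_j$ may share the same length, or a length may exceed $n$; the construction handles the former automatically (the same input gate is set to $1$, which is fine for an $\ORgate$) and the latter is avoided by choosing $n$ large enough relative to the lengths $\tau$ can output, which is bounded since $\tau \in \FAC^0$ produces words of polynomial length. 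Once these encodings are pinned down, the rest is the routine recombination of the $\FAC^0$ subroutines from Section~\ref{sec:facfunc} together with the configuration-graph-to-$\ORgate$-circuit transformation already used above.
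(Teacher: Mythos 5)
Your proposal follows essentially the same route as the paper's proof: the many-one reduction maps $w$ to the bit-vector recording which unary query lengths occur in $\tau(w)$, and the uniform $\ORgate$ circuit family embeds, for each length $k \le n$, a copy of the configuration graph of the fixed tally machine on input $1^k$, with input gate $k$ playing the role of the start configuration and all accepting configurations feeding the single output gate, so that gate $k$ is connected to the output iff the machine accepts $1^k$. The only adjustments needed are cosmetic: drop the word ``reversed'' (with the input gate at the start vertex and the accept vertices feeding the output, the wires must follow the configuration-graph edges in their original direction, exactly as in Theorem~\ref{thm:tallyNL-in-uni-OR}), and your pairing-based gate indexing is unnecessary, since one can simply let gate $k$ correspond to the query $1^k$ and pad the reduction's output to a fixed polynomial length.
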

  \begin{proof}
    Let $L \in \Reducible{dtt}{\FAC^0}{\tally\NL}$ with  $T \in \tally\NL$ as the oracle language. 
    That is, there exists a function $\tau \in \FAC^0$ that maps
     $\{ 0,1 \}^*$ to 
      the set of tuples of unary words, where
     at least one word in the tuple $\tau(w) = (x_1, x_2,\ldots,x_\ell)$ is in $T $ iff $w \in L$.

    Let $r:\{0,1\}^* \to \{0,1\}^*$.
    Let the notation $r(w)_k$ denote the $k$th bit of the word $r(w)$.
    The function~$r$ is defined in a bitwise fashion as follows:
    \begin{equation}
      r(w)_k = 
      \begin{cases}
         1 & \text{if } 1^k \text{ is in the tuple } \tau(w),\\
         0 & \text{otherwise.}
      \end{cases}
    \end{equation}
    We claim that $r$ is an $\FAC^0$ many-one reduction from $L$ to a language in $\uniformOR{\FAC^0}$.

    First we prove that $r \in \FAC^0$.
    The circuit that computes $r(w)$ first computes the tuple $\tau(w)$,
      which is possible since $\tau \in \FAC^0$. 
    Without loss of generality we say
      that %
        $\tau(w)$ is a tuple of $\ell \in \mathbb{N}$ unary words,
          each of length $\leq q  \in \mathbb{N}$,
            and each of which is padded up to length $q$ with~$0$'s
            (i.e.\ the unary word $1^k$ is padded to be $0^{q-k}1^{k}$;
            this technicality comes from the fact that the circuit has
            a fixed number $q$ of wires used encode a unary string which is dependent on the circuit input).
    Then, in constant depth, the circuit translates each string of the form
      $0^{q-k}1^k$ into a string of the form $0^{q-k}1 0^{k-1}$.
    All $\ell$ such words are then bitwise $\ORgate$ed to give a single binary  
      string of length $q$, that represents $r(w)$. 
    This is all easily achieved in $\FAC^0$.

    We now describe a uniform polynomial-size $\ORgate$ circuit family $\mathcal{C}$. 
    Let $f_\tallyMachine : \{1\}^* \rightarrow \mathcal{C}$ be
      the uniformity function of the circuit family $\mathcal{C}$. 
    On $1^m$, the function $f_\tallyMachine$ creates  $m$ configuration graphs:
    one configuration graph $C_{\tallyMachine,k}$ of machine $\tallyMachine$
    (that accepts $T$) on input $1^k$ for each $k \in \{1,\ldots, m\}$ (a
    generalization of the technique used in the proof of
    Theorem~\ref{thm:tallyNL-in-uni-OR}).  Then, each of the $m$ graphs are
    modified and connected together to create a single $\ORgate$ circuit as
    follows.  Each edge becomes a wire.  The vertex in $C_{\tallyMachine,k}$
    that represents the start configuration of  $\tallyMachine$ on input $1^k$
    becomes the $k$th input gate of the $\ORgate$ circuit. 
    All other vertices become an $\ORgate$ gate.  For each $k$, all accept
    vertices of the graph $C_{\tallyMachine,k}$ (representing the accepting
    configurations) are wired into a new $\ORgate$ gate $o_k$.  We add a single
    constant 0 gate which is wired into every $\ORgate$ gate in the circuit.
    Finally each of the $o_k$ gates, where $1 \leq k \leq m$, are wired into a
    single $\ORgate$ gate which is the output gate. $\mathcal{C}$ is of
    polynomial size (each circuit $f_\tallyMachine(1^m)$ is of size polynomial
    in $m$), and it is relatively straightforward to verify that $\mathcal{C}$
    is $\FAC^0$ uniform. 

    We need to argue that the circuit family $\mathcal{C}$ accepts $r(w)$ iff
    $w \in L$.  Suppose $w \in L$. This implies that the tuple $\tau(w)$
    contains at least one word $1^j$ in the tally set $T$. In turn, this
    implies that bit $j$ in $r(w)$ is 1 (formally, $r(w)_j = 1$).
    Let $|r(w)| = m$. 
    The fact that $\tallyMachine$ accepts $1^j$ implies that the circuit $c_m =
    f_\tallyMachine(1^m) \in \mathcal{C}$ is constructed in such a way that its
    $j$th input gate is on a path to its output gate. Input gate $j$ is set to
    1, therefore circuit $c_m$ accepts $r(w)$.

    Suppose $w \not\in L$. Hence, no word in the tuple $\tau(w)$ is in the
    tally set $T$. 
    Let  $1^j$  be any unary word in the tuple $\tau(w)$.
    In turn, this implies that bit $j$ in $r(w)$ is 1 (formally, $r(w)_j = 1$).
    Let $|r(w)| = m$. Consider the circuit $C_m = f_\tallyMachine(1^m) \in
    \mathcal{C}$.  Since the Turing machine $\tallyMachine$ does not accept
    $1^j$, this implies that there is no path from input gate $j$ in $C_m$ to
    the output gate of $C_m$. 
    Since  $C_m$ is an $\ORgate$ circuit with no paths from the input gates that are
    set to 1 to the output gate, and where there are no constant 1 gates,  it
    rejects $r(w)$.

    Therefore $r$ is a many-one reduction from $L$ to a language in $\uniformOR{\FAC^0}$. 
  \end{proof}
  
  Section~\ref{sec:and} contains our results on $\ANDgate$ circuits, analogous
  to those shown here for $\ORgate$ circuits.

  We omit the proofs of the
  following theorems, which can be obtained using the techniques in this
  section and those in Section~\ref{sec:and}.

  \begin{theorem}
    \label{lem:theTuring_cluster_eq}
    The following classes are equal:
    \begin{compactitem}
      \item $\Reducible{T}{\FAC^0}{\uniformOR{\FAC^0}}$
      \item $\Reducible{T}{\FAC^0}{\uniformAND{\FAC^0}}$
      \item $\Reducible{T}{\FAC^0}{\tally\NL}$
    \end{compactitem}
  \end{theorem}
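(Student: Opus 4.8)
The plan is to establish the two equalities $\Reducible{T}{\FAC^0}{\uniformOR{\FAC^0}} = \Reducible{T}{\FAC^0}{\tally\NL}$ and $\Reducible{T}{\FAC^0}{\uniformAND{\FAC^0}} = \Reducible{T}{\FAC^0}{\tally\NL}$, which together give the statement. Two ingredients are needed. The first is that $\FAC^0$ Turing reductions compose, i.e.\ $\Reducible{T}{\FAC^0}{\Reducible{T}{\FAC^0}{\C}} = \Reducible{T}{\FAC^0}{\C}$ for every class $\C$; the nontrivial inclusion $\subseteq$ is obtained by substituting, for each oracle gate on $\ell$ input wires of the outer $\FAC^0$ oracle circuit, a copy of the $\ell$-input member of the inner $\FAC^0$ oracle-circuit family (its output wire taking the place of the oracle gate's output, and its own oracle gates becoming oracle gates of the composed circuit). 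A constant number of constant-depth, polynomial-size, $\DLOGTIME$-uniform circuits composed in this way is again constant-depth, polynomial-size, and $\DLOGTIME$-uniform. The second ingredient consists of results already established: $\length\NL \subseteq \uniformOR{\FAC^0}$ and $\length\NL \subseteq \uniformAND{\FAC^0}$ (Theorems~\ref{thm:tallyNL-in-uni-OR} and~\ref{thm:tallyNL-in-uni-AND}), together with $\uniformOR{\FAC^0} \subseteq \Reducible{dtt}{\FAC^0}{\tally\NL}$ (Theorem~\ref{thm:uniformOR_subneq_dttTallyNL}) and $\uniformAND{\FAC^0} \subseteq \Reducible{ctt}{\FAC^0}{\tally\NL}$ (Theorem~\ref{lem:uniformAND_subneq_cttTallyNL}).

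For $\Reducible{T}{\FAC^0}{\uniformOR{\FAC^0}} \subseteq \Reducible{T}{\FAC^0}{\tally\NL}$ I would argue as follows. Since $\uniformOR{\FAC^0} \subseteq \Reducible{dtt}{\FAC^0}{\tally\NL}$ and every disjunctive truth-table reduction is in particular a Turing reduction, $\uniformOR{\FAC^0} \subseteq \Reducible{T}{\FAC^0}{\tally\NL}$; composing this with the outer reduction and using the first ingredient yields $\Reducible{T}{\FAC^0}{\uniformOR{\FAC^0}} \subseteq \Reducible{T}{\FAC^0}{\Reducible{T}{\FAC^0}{\tally\NL}} = \Reducible{T}{\FAC^0}{\tally\NL}$. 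The same argument with ``ctt'' in place of ``dtt'' and $\uniformAND{\FAC^0}$ in place of $\uniformOR{\FAC^0}$ gives $\Reducible{T}{\FAC^0}{\uniformAND{\FAC^0}} \subseteq \Reducible{T}{\FAC^0}{\tally\NL}$.

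For the reverse inclusion $\Reducible{T}{\FAC^0}{\tally\NL} \subseteq \Reducible{T}{\FAC^0}{\uniformOR{\FAC^0}}$, take an $\FAC^0$ oracle circuit $M$ with a tally oracle $T \in \tally\NL$, and set $\hat{T} = \{ w \in \{0,1\}^{*} \mid 1^{|w|} \in T \}$. A logspace machine decides $\hat{T}$ by simulating the $T$-machine on $1^{|w|}$ while ignoring the actual bits of $w$, so $\hat{T} \in \length\NL$, and hence $\hat{T} \in \uniformOR{\FAC^0}$ by Theorem~\ref{thm:tallyNL-in-uni-OR}. Now modify $M$ by replacing each tally oracle gate---say one with $m$ input wires carrying $0^{m-i}1^{i}$ and meant to output $\chi_T(1^{i})$---by a constant-depth gadget, in the spirit of the gadget of Lemma~\ref{lem:T_reduce_tallyNL_subneq_NL}: for each $\ell \in \{0, \dots, m\}$ include an $\hat{T}$-oracle gate whose $\ell$ input wires are fed by constant gates, so that it outputs $\chi_{\hat{T}}(1^{\ell}) = \chi_T(1^{\ell})$, and then use the original gate's (ordered) query wires to determine $i$ and select the answer $\chi_T(1^{i})$ via an $\AC^0$ multiplexer. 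The result is a constant-depth, polynomial-size, $\DLOGTIME$-uniform oracle circuit whose only oracle is $\hat{T} \in \uniformOR{\FAC^0}$, which establishes $\Reducible{T}{\FAC^0}{\tally\NL} \subseteq \Reducible{T}{\FAC^0}{\uniformOR{\FAC^0}}$. The $\ANDgate$ case is verbatim, using $\hat{T} \in \length\NL \subseteq \uniformAND{\FAC^0}$ via Theorem~\ref{thm:tallyNL-in-uni-AND} (whose proof in turn relies on $\NL = \co\NL$, cf.\ Lemma~\ref{lem:tallynl_eq_tallyconl}). Combining the two directions shows the three stated classes are equal.

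I expect the crux to be the first ingredient: verifying that the oracle-gate substitution preserves $\DLOGTIME$-uniformity. One fixes an addressing scheme for the gates of the composed circuit---an address being either a gate of the outer circuit, or a pair consisting of an outer oracle gate $\gamma$ together with a gate of the circuit substituted at $\gamma$---and shows that the connectivity and gate-type queries of Definition~\ref{def:dlogtime} can be answered in time linear in the address length, by a constant number of calls to the two given uniformity machines. This relies on the fan-in of $\gamma$ (equivalently, the size of the circuit substituted at $\gamma$) being itself $\DLOGTIME$-computable, which Definition~\ref{def:dlogtime} guarantees; it is essentially the logspace re-computation argument already used in Lemma~\ref{lem:dttReduceUniOR_subset_dttTallyNL}. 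The remaining plumbing closely follows the proofs of Lemmas~\ref{lem:dttReduceUniOR_subset_dttTallyNL}, \ref{lem:dttReduceTallyNL_subset_mUniformOR}, and~\ref{lem:T_reduce_tallyNL_subneq_NL}.
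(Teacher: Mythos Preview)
The paper omits the proof of this theorem entirely, stating only that it ``can be obtained using the techniques in this section and those in Section~\ref{sec:and}.'' Your proposal is correct and is fully consistent with that guidance: you package the techniques of Lemmas~\ref{lem:dttReduceUniOR_subset_dttTallyNL}--\ref{lem:dttReduceTallyNL_subset_mUniformOR} (and their $\ANDgate$ analogues) as black boxes via Theorems~\ref{thm:uniformOR_subneq_dttTallyNL}, \ref{lem:uniformAND_subneq_cttTallyNL}, \ref{thm:tallyNL-in-uni-OR}, \ref{thm:tallyNL-in-uni-AND}, and then close the loop by composing $\FAC^0$ Turing reductions. The one piece not already in the paper---transitivity of $\DLOGTIME$-uniform $\FAC^0$ Turing reductions---you identify correctly and sketch adequately; the addressing scheme you describe, together with the fan-in clause of Definition~\ref{def:dlogtime}, is exactly what is needed.

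A minor stylistic remark: for the reverse inclusion your gadget (one $\hat T$-oracle gate per possible query length, plus a multiplexer) is essentially the construction of Figure~\ref{fig:tally-gate-gadget} with the constant gates $T(1^\ell)$ replaced by $\hat T$-oracle gates on $\ell$ constant inputs, so you could simply cite Lemma~\ref{lem:T_reduce_tallyNL_subneq_NL} rather than re-derive it. Also, your detour through $\hat T\in\length\NL$ is the right way to bridge the alphabet mismatch between $\tally\NL$ and the binary-input circuits of $\uniformOR{\FAC^0}$; this point is implicit in the paper's proofs but worth making explicit as you do.
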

  
  \begin{theorem}
  \label{lem:theTruthTable_cluster_eq}
    The following classes are equal:
    \begin{compactitem}
      \item $\Reducible{tt}{\FAC^0}{\uniformOR{\FAC^0}}$
      \item $\Reducible{tt}{\FAC^0}{\uniformAND{\FAC^0}}$
      \item $\Reducible{tt}{\FAC^0}{\tally\NL}$
    \end{compactitem}
  \end{theorem}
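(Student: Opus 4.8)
The plan is to derive all three equalities from two ``sandwiches'': I would show that each of $\uniformOR{\FAC^0}$ and $\uniformAND{\FAC^0}$ is $\FAC^0$ truth-table reducible to $\tally\NL$, and conversely that $\tally\NL$ sits inside each of these circuit classes, and then invoke closure of $\FAC^0$ truth-table reducibility under composition (the $\FAC^0$ analogue of the classical fact that $\leq_{tt}$ composed with $\leq_{tt}$, $\leq_{dtt}$, or $\leq_{ctt}$ is again $\leq_{tt}$). Concretely, for the first ``sandwich'' I would take $A \leq^{\FAC^0}_{tt} L'$ via $(\tau,\sigma)$ with $L' \in \uniformOR{\FAC^0}$; here $\tau(w) = (a_1,\dots,a_\ell)$ consists of polynomially many strings of polynomial length (as $\tau\in\FAC^0$), and $w\in A$ iff $\sigma(w)$ is true on $\chi_{L'}(a_1),\dots,\chi_{L'}(a_\ell)$. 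Applying the construction inside the proof of Lemma~\ref{lem:dttReduceUniOR_subset_dttTallyNL} to each query $a_j$ separately yields, in $\FAC^0$ from $a_j$, a block of unary words $(u_{j,i})_i$ and a fixed $\tally\NL$ machine $\tallyMachine_f$ with $\chi_{L'}(a_j) = \bigvee_i \chi_{\tallyMachine_f}(u_{j,i})$. I would then let $\tau'(w)$ be the concatenation over $j$ of these blocks---still polynomially many unary words, computable in $\FAC^0$ by composing $\tau$ with the routines of Section~\ref{sec:facfunc}---and let $\sigma'(w)$ be $\sigma(w)$ with each variable $v_j$ replaced by $\bigvee_i v_{j,i}$, giving an $\FAC^0$ truth-table reduction of $A$ to $\tallyMachine_f \in \tally\NL$.

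For $\uniformAND{\FAC^0}$ I would run the same splicing but starting from the $\ANDgate$-circuit counterpart of Lemma~\ref{lem:dttReduceUniOR_subset_dttTallyNL} developed in Section~\ref{sec:and} (cf.\ Lemma~\ref{lem:uniformAND_subneq_cttTallyNL}): there membership of $a_j$ in $L'$ is decided by an $\FAC^0$-uniform, constant-depth Boolean combination of polynomially many reachability/non-reachability questions, which lie in $\tally\NL$ after using $\tally\NL = \tally\co\NL$ (Lemma~\ref{lem:tallynl_eq_tallyconl}), with some conjuncts gated by individual bits of $a_j$ (constants available to $\sigma'$); I would splice that combination into $\sigma(w)$ just as above. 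For the reverse direction of both sandwiches, I would take $A \leq^{\FAC^0}_{tt} T$ with $T\in\tally\NL$ and pass to the length-encoded language $\hat T = \{\, v\in\{0,1\}^* : 1^{|v|}\in T\,\}$. This $\hat T$ is length-invariant and lies in $\NL$ (on input $v$ simulate the $\NL$ machine for $T$ on the virtual input $1^{|v|}$ in logspace), so $\hat T\in\length\NL$ and hence $\hat T\in\uniformOR{\FAC^0}$ by Theorem~\ref{thm:tallyNL-in-uni-OR} and $\hat T\in\uniformAND{\FAC^0}$ by Theorem~\ref{thm:tallyNL-in-uni-AND}. I would keep the same query list, $\tau'(w)=\tau(w)$, but ask it against $\hat T$; since $\chi_{\hat T}(a_j) = [\,1^{|a_j|}\in T\,]$ agrees with $\chi_T(a_j)$ exactly when $a_j\in\{1\}^*$, I would repair this in $\sigma'$ using that $c_j := [\,a_j\in\{1\}^*\,]$ is a constant determined by $w$ and $\FAC^0$-computable from it, and $\chi_T(a_j) = c_j\wedge\chi_{\hat T}(a_j)$ in all cases: so $\sigma'(w)$ is $\sigma(w)$ with each $v_j$ replaced by $c_j\wedge v_j$.

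The step I expect to be the main obstacle is checking that these substitutions stay inside $\FAC^0$: attaching a polynomial-size, constant-depth layer of $\ORgate$/$\ANDgate$ (or general Boolean-combination) gates to the outer truth-table function while concatenating polynomially many inner query lists must again be an $\FAC^0$-computable pair $(\tau',\sigma')$. This relies on closure of $\FAC^0$ under the relevant constant-depth composition, together with the convention---already implicit in the $dtt$/$ctt$ discussion after Definition~\ref{def:truthtablereduction}---that a truth-table condition is carried by a polynomial-size, $\FAC^0$-constructible description of a Boolean function rather than an exponential-size literal table. Once that is in hand, the two sandwiches give $\Reducible{tt}{\FAC^0}{\uniformOR{\FAC^0}} = \Reducible{tt}{\FAC^0}{\tally\NL} = \Reducible{tt}{\FAC^0}{\uniformAND{\FAC^0}}$, which is the claimed equality of all three classes.
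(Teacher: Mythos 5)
Your proposal is correct and is essentially the proof the paper intends: the paper omits this proof, stating only that it "can be obtained using the techniques in this section and those in Section~\ref{sec:and}", and your splicing of the query-to-unary-block constructions from Lemmas~\ref{lem:dttReduceUniOR_subset_dttTallyNL} and~\ref{lem:cttReduceUniAND_subset_cttTallyNL} into the outer truth table, together with the length-language trick via Theorems~\ref{thm:tallyNL-in-uni-OR} and~\ref{thm:tallyNL-in-uni-AND} for the reverse inclusions, is exactly that. Your caveat about representing $\sigma$ as a polynomial-size, $\FAC^0$-constructible description of the Boolean function (rather than a literal $2^\ell$-entry table) is the right reading of Definition~\ref{def:truthtablereduction}, since the paper's own dtt/ctt reductions already use polynomially many queries.
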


  \section{Uniform $\ANDgate$ circuits}
  \label{sec:and}
  Here we give upper bounds and lower bounds on the power of uniform $\ANDgate$
  circuits in terms of $\tally\NL$ and problems reducible to $\tally\NL$.
  The proofs have a similar flow to those for $\ORgate$ circuits in the 
  Section~\ref{sec:or}, although in a number of cases different tricks are used. 
  
  We begin with an upperbound and lowerbound on polynomial-size uniform
  $\ANDgate$ circuits: i.e.\ the class $\uniformAND{\FAC^0}$.

  \begin{theorem}
    $\length\NL \subsetneq \uniformAND{\FAC^0}$.
    \label{thm:tallyNL-in-uni-AND}
  \end{theorem}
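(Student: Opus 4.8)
The plan is to mirror the proof of Theorem~\ref{thm:tallyNL-in-uni-OR}, but with reachability replaced by \emph{non-reachability}, exploiting the fact that an $\ANDgate$ gate outputs $0$ as soon as one of its inputs is $0$. The one genuinely new ingredient is a complementation step: since $\length\NL = \length\co\NL$ (the length analogue of Lemma~\ref{lem:tallynl_eq_tallyconl}), any $L \in \length\NL$ has its complement $\overline{L}$ recognised by a nondeterministic logspace machine~$\tallyMachine$. For each $n$ I would build, via Lemma~\ref{lem:config_in_AC0}, the configuration graph $C_{\tallyMachine,1^n}$ of $\tallyMachine$ on the fixed input $1^n$ (legitimate because $L \in \length$, so $w \in L \iff 1^{|w|} \in L$), assuming as usual that $\tallyMachine$ halts and that accepting configurations have no successors. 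Reverse every edge of $C_{\tallyMachine,1^n}$ and turn the resulting DAG into an $\ANDgate$ circuit: every configuration becomes an $\ANDgate$ gate except the accepting configurations, which become constant $0$ gates; the gate for a configuration $c$ takes as inputs the gates of the one-step successors of $c$ in $C_{\tallyMachine,1^n}$; add $n$ unconnected dummy input gates and wire a constant $1$ into every $\ANDgate$ gate (harmless, exactly as the constant $0$ in the $\ORgate$ construction); finally designate the start-configuration gate as the output gate.

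By induction up the reversed DAG, the value of the gate for $c$ is $0$ exactly when $c$ can reach an accepting configuration, and $1$ exactly when it cannot (in particular a halting non-accepting configuration becomes an empty $\ANDgate$, hence evaluates to $1$). Therefore the output is $1$ iff the start configuration cannot reach an accepting configuration, i.e.\ iff $\tallyMachine$ rejects $1^n$, i.e.\ iff $1^n \notin \overline{L}$, i.e.\ iff $w \in L$. Applying this transformation uniformly over all $n$ yields a circuit family $\mathcal{C}$; since the configuration graph is $\FAC^0$ (indeed $\DLOGTIME$-uniform) computable and edge reversal, relabelling, and the addition of the constant/dummy gates are all $\DLOGTIME$-uniform, $\mathcal{C}$ is $\FAC^0$-uniform and of polynomial size. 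Hence $L \in \uniformAND{\FAC^0}$, giving $\length\NL \subseteq \uniformAND{\FAC^0}$. For strictness, take $X = \{\, w \mid w \text{ contains no } 0 \,\} = \{1\}^*$: the depth-$1$ circuit whose output gate is the $\ANDgate$ of all $n$ input gates decides $X_n$ and is trivially $\FAC^0$-uniform, so $X \in \uniformAND{\FAC^0}$, whereas $X \notin \length$ (for $n \geq 1$, $1^n \in X$ but $0^n \notin X$) and hence $X \notin \length\NL$.

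The main point requiring care is getting the dual construction exactly right: recognising that an $\ANDgate$ circuit built on the \emph{reversed} configuration graph, with accepting configurations pinned to $0$, computes ``the start configuration does not reach an accept,'' and checking the boundary cases (accepting configurations are sinks of the original graph, hence sources of the reversed graph; dead-end non-accepting configurations become empty $\ANDgate$s and must evaluate to $1$). Once the complementation trick $\length\NL = \length\co\NL$ is in place, the uniformity and polynomial-size bookkeeping is identical to that in Theorem~\ref{thm:tallyNL-in-uni-OR}, and the strictness witness is immediate.
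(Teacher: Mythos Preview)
Your proof is correct and follows essentially the same approach as the paper: both use $\length\NL=\length\co\NL$ to get a machine whose configuration graph is turned into an $\ANDgate$ circuit in which a $0$ propagates along a path, and both use the language $\{1\}^*$ as the strictness witness. The only cosmetic difference is that the paper keeps the configuration-graph edges in the forward direction (start vertex becomes the constant $0$, reject vertices feed a new output $\ANDgate$), whereas you reverse the edges (accepting configurations become constant $0$'s, the start vertex is the output); these are dual realisations of the same idea and both are valid.
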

  \begin{proof}
    Let $L \in \length\NL$.
    Since $\length\NL = \length\co\NL$,  
      this implies that
    $L$ is accepted by a co-non-deterministic logspace Turing machine $\tallyMachine$,
      for which all computation paths are accepting exactly
      for those words $w \in L$. %
    The configuration graph $C_{\tallyMachine,w}$ for~$\tallyMachine$ on input $w\in\{0,1\}^*$ 
      is $\FAC^0$ computable from $\tallyMachine$ and $w$ (see Lemma~\ref{lem:config_in_AC0}).
    We construct the configuration graph assuming that its input $w$ is $1^{|w|}$ (recall
    that if $w \in L$ then all words in $\{0,1\}^{|w|}$ are in $L$).
    We modify the graph $C_{\tallyMachine,w}$ to create an $\ANDgate$ circuit as follows.
    Each edge becomes a wire and each vertex becomes an $\ANDgate$ gate, 
      except the start vertex (representing the initial configuration of $\tallyMachine$ on input $w$)
      which becomes a constant 0 gate. We add $|w|$ ``dummy'' input gates that are not wired to anything. 
    We add a new $\ANDgate$ gate that is the circuit's output gate,
      and a constant 1 is wired into every $\ANDgate$ gate in the circuit.
    All reject vertices (representing the rejecting configurations) are wired into the output gate.
    If $w \in L$ the circuit accepts since there is no path from 0 to the output gate.
    If $w \not\in L$ the circuit rejects since there is a path from 0 to the output gate.

    If we apply this transformation to the set of all configurations graphs 
      for the fixed machine $\tallyMachine$ over all inputs $w \in \{ 1\}^*$,
      we get a circuit family $\mathcal{C}$.
    Members of such a  circuit family are computable by an $\FAC^0$ function
    $f_\tallyMachine: \{1\}^* \rightarrow \mathcal{C}$. %
    
            Consider the language $L = \{ 1^n \mid n \in \mathbb{N}  \} $ which is easily seen to be in $\uniformAND{\FAC^0}$ but not in $\length\NL$, giving the required inequality for strict containment.    
  \end{proof}

  Next we show that languages accepted by uniform polynomial-size  $\ANDgate$
  circuits are strictly contained in those conjunctive truth-table reducible
  to $\tally\NL$.

  \begin{theorem}
    \label{lem:uniformAND_subneq_cttTallyNL}
     $\uniformAND{\FAC^0} \subsetneq \Reducible{ctt}{\FAC^0}{\tally\NL}$
  \end{theorem}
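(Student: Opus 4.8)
The plan is to mirror the proof of Theorem~\ref{thm:uniformOR_subneq_dttTallyNL}, replacing the $\ORgate$ cluster theorem by its $\ANDgate$ analogue. For the containment $\subseteq$ the argument is immediate: $\uniformAND{\FAC^0} \subseteq \Reducible{m}{\FAC^0}{\uniformAND{\FAC^0}}$ by the identity reduction (treating the circuit family itself as the oracle), and then Theorem~\ref{lem:theANDcluster_eq} (the $\ANDgate$ counterpart of Theorem~\ref{lem:theORcluster_eq}, stated and proved below) gives $\Reducible{m}{\FAC^0}{\uniformAND{\FAC^0}} = \Reducible{ctt}{\FAC^0}{\tally\NL}$, so $\uniformAND{\FAC^0} \subseteq \Reducible{ctt}{\FAC^0}{\tally\NL}$.

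For strictness it suffices to exhibit one language in $\Reducible{ctt}{\FAC^0}{\tally\NL} \setminus \uniformAND{\FAC^0}$; I will use $L = \{\, w \in \{0,1\}^* : w \text{ contains at least one } 1 \,\}$, which (as noted in the proof of Theorem~\ref{thm:tallyNL-in-uni-OR}) lies in $\AC^0 \cap \textrm{non-uniform-}\COR$. Every $\AC^0$ language is $\FAC^0$ many-one reducible to the finite tally language $\{1\} \in \tally\NL$ — the reduction merely evaluates the $\AC^0$ membership predicate of $w$ and asks whether the resulting single bit equals $1$ — so $L \in \Reducible{m}{\FAC^0}{\tally\NL} \subseteq \Reducible{ctt}{\FAC^0}{\tally\NL}$. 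Conversely $L \notin \uniformAND{\FAC^0}$: the output gate of any $\ANDgate$ circuit on $n$ inputs computes the conjunction of exactly those source gates that reach it, so the function computed is the constant $0$, the constant $1$, a single input variable, or the conjunction of two or more distinct input variables; but for $n \ge 2$ the function ``$w$ has a $1$'' is $\bigvee_{0 \le i < n} w_i$, which is none of these (e.g.\ it outputs $1$ on $10\cdots 0$, whereas a conjunction of $\ge 2$ variables outputs $0$ there). Hence no $\ANDgate$ circuit family decides $L$, so $L \notin \textrm{non-uniform-}\CAND \supseteq \uniformAND{\FAC^0}$, and the inclusion is proper.

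The only real work is in Theorem~\ref{lem:theANDcluster_eq}, whose delicate inclusion is the $\ANDgate$ analogue of Lemma~\ref{lem:dttReduceUniOR_subset_dttTallyNL}: converting a ctt reduction to a uniform $\ANDgate$ circuit family into a ctt reduction to $\tally\NL$. The obstacle specific to the $\ANDgate$ setting is that an $\ANDgate$ circuit accepts exactly when \emph{no} $0$-labelled input gate and \emph{no} constant $0$ gate reaches the output gate, so — unlike the $\ORgate$ case, which tested ordinary reachability from a $1$ — the oracle machine must test a complementary ``no path'' reachability property, and the encoding must single out the $0$-bits of the queried words rather than the $1$-bits. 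Making this precise while keeping the oracle language tally uses $\NL = \co\NL$, hence $\tally\NL = \tally\co\NL$ (Lemma~\ref{lem:tallynl_eq_tallyconl}); apart from this complementation-and-encoding bookkeeping, the argument transfers from Section~\ref{sec:or}.
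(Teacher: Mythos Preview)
Your proof is correct and follows the same approach as the paper: the inclusion is obtained via $\uniformAND{\FAC^0}\subseteq\Reducible{m}{\FAC^0}{\uniformAND{\FAC^0}}$ together with Theorem~\ref{lem:theANDcluster_eq}, and strictness is witnessed by a language in $\AC^0\cap\textrm{non-uniform-}\COR$ that no $\ANDgate$ circuit family can compute. The paper merely asserts the existence of such a language, whereas you spell out the concrete witness $L=\{w:\text{$w$ contains a $1$}\}$ and verify both $L\in\Reducible{ctt}{\FAC^0}{\tally\NL}$ and $L\notin\uniformAND{\FAC^0}$; your additional paragraph correctly anticipates that the $\NL=\co\NL$ step (Lemma~\ref{lem:tallynl_eq_tallyconl}) is the key new ingredient in the $\ANDgate$ version of the cluster theorem.
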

  \begin{proof}
    It is trivially the case that 
      $\uniformAND{\FAC^0} \subseteq \Reducible{m}{\FAC^0}{\uniformAND{\FAC^0}}$.
    Then, by applying Theorem~\ref{lem:theANDcluster_eq} (stated and proved below) we get 
      that $\uniformAND{\FAC^0} \subseteq \Reducible{ctt}{\FAC^0}{\tally\NL} %
        =  \Reducible{m}{\FAC^0}{\uniformAND{\FAC^0}}$.
    To show strict containment, observe that $\Reducible{ctt}{\FAC^0}{\tally\NL}$
      contains languages in $\AC^0  \cap \, \textrm{non-uniform-}\COR$ that
      are not accepted by any $\ANDgate$ circuit family. 
  \end{proof}

  We also get the following inequality:
  $\uniformAND{\FAC^0} \neq \Reducible{m}{\FAC^0}{\tally\NL}$, as
  $\Reducible{m}{\FAC^0}{\tally\NL}$ contains languages in $\AC^0  \cap\,  \textrm{non-uniform-}\COR$
  that are not accepted by any $\ORgate$ circuit family. 

  The remainder of this section is concerned with the proof of
  Theorem~\ref{lem:theANDcluster_eq}, which was used in
  Theorem~\ref{lem:uniformAND_subneq_cttTallyNL} to give an upper bound on
  $\uniformAND{\FAC^0}$, and shows the equivalence of three complexity
  classes. %

  \begin{theorem}
    \label{lem:theANDcluster_eq}
    The following classes are equal:
    \begin{compactitem}
      \item $\Reducible{m}{\FAC^0}{\uniformAND{\FAC^0}}$
      \item $\Reducible{ctt}{\FAC^0}{\uniformAND{\FAC^0}}$ 
      \item $\Reducible{ctt}{\FAC^0}{\tally\NL}$
    \end{compactitem}
  \end{theorem}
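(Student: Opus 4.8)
The plan is to prove Theorem~\ref{lem:theANDcluster_eq} by the same inclusion-cycle strategy used for the $\ORgate$ cluster in Theorem~\ref{lem:theORcluster_eq}, dualizing $\ORgate$ to $\ANDgate$, $\COR$ to $\CAND$, disjunctive to conjunctive, and accept-reachability to reject-reachability throughout. Concretely, I would establish three lemmas forming the cycle $\Reducible{m}{\FAC^0}{\uniformAND{\FAC^0}} \subseteq \Reducible{ctt}{\FAC^0}{\uniformAND{\FAC^0}} \subseteq \Reducible{ctt}{\FAC^0}{\tally\NL} \subseteq \Reducible{m}{\FAC^0}{\uniformAND{\FAC^0}}$. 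The first inclusion is immediate since a conjunctive truth-table reduction generalizes a many-one reduction (a single-element list), exactly as in Lemma~\ref{lem:mReduceUniOR_subset_dttUniformOR}.

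For the second inclusion, given $L \in \Reducible{ctt}{\FAC^0}{\uniformAND{\FAC^0}}$ with oracle language $L' \in \uniformAND{\FAC^0}$ recognized by an $\FAC^0$ uniformity function $f$, I would mimic the proof of Lemma~\ref{lem:dttReduceUniOR_subset_dttTallyNL}. The key observation to dualize: an $\ANDgate$ circuit \emph{rejects} a word $x$ iff there is a bit of $x$ set to $0$ whose assigned input gate is on a path to the output gate (or a constant $0$ gate on such a path). So I would build an $\FAC^0$ function $\tau'$ that, for each bit $i$ of each word $x_l$ in the conjunctive list $\tau(w)$, produces a unary word encoding $(|x_l|, i)$ when bit $i$ is $0$, and the trivial word $1$ (to be accepted) when bit $i$ is $1$, using the pairing function and binary-to-unary conversion from Section~\ref{sec:facfunc} just as in Equation~\eqref{eq:tau_or}. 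Then I would define a non-deterministic logspace machine $\tallyMachine_f$ that, on a nontrivial input, unpairs to recover $(n,i)$, recomputes $f(1^n)$ via the logspace re-computation trick, and checks \emph{non-}reachability from input gate $i$ (or a constant $0$ gate) to the output gate --- here I invoke $\NL = \co\NL$ (Lemma~\ref{lem:tallynl_eq_tallyconl}) so that this complement-of-reachability test stays in $\tally\NL$. The resulting reduction is conjunctive: every word in $\tau'(w)$ is accepted by $\tallyMachine_f$ iff no bad ($0$-bit-on-a-path, or constant-$0$-on-a-path) witness exists, i.e.\ iff every $x_l$ is accepted by the $\ANDgate$ circuit, i.e.\ iff $w \in L$.

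For the third inclusion, given $L \in \Reducible{ctt}{\FAC^0}{\tally\NL}$ with tally oracle $T$ accepted by logspace NTM $\tallyMachine$, I would dualize Lemma~\ref{lem:dttReduceTallyNL_subset_mUniformOR}. Since $\tally\NL = \tally\co\NL$, take a co-nondeterministic $\tallyMachine$ for $T$ whose configuration graphs have the reject-reachability property. The many-one reduction $r$ should map $w$ to a binary string whose bit $k$ is $0$ exactly when $1^k$ appears in the conjunctive list $\tau(w)$ (and $1$ otherwise), which is $\FAC^0$ by the same bitwise-OR-of-translated-unary-words construction (dualized to bitwise-AND-of-negated-translations). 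The $\ANDgate$ circuit family is built from the $m$ configuration graphs $C_{\tallyMachine,k}$: start vertices become input gates, other vertices become $\ANDgate$ gates, reject vertices feed into per-$k$ $\ANDgate$ gates $o_k$, a single constant $1$ feeds every gate, and the $o_k$'s feed the output $\ANDgate$ gate. One checks that if some $1^j \in \tau(w)$ is \emph{not} in $T$ then $\tallyMachine$ has a reject path on $1^j$, so input gate $j$ reaches the output, and since $r(w)_j = 0$ the circuit rejects; conversely if all listed words are in $T$ then no $0$-set input gate reaches the output (and there is no constant $0$), so the circuit accepts. I expect the main obstacle to be getting the edge cases of the encoding exactly right --- in particular ensuring the trivial words (encoding "$1$-bit, harmless") are handled so that $\tallyMachine_f$ accepts/rejects them correctly under the dualization, and confirming that the $\co\NL$ reachability complement genuinely lands in $\tally\NL$ rather than merely $\co\tally\NL$ (which is fine precisely because of Lemma~\ref{lem:tallynl_eq_tallyconl}). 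The $\FAC^0$-uniformity verification of the constructed $\ANDgate$ circuit family, while routine, also needs care to mirror the $\ORgate$ argument.
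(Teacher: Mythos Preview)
Your proposal is correct and follows essentially the same approach as the paper: the same three-lemma inclusion cycle, the same dualization of the $\ORgate$ arguments (swapping the roles of $0$ and $1$ bits in the encoding of Equation~\eqref{eq:tau_or}, and using the Immerman--Szelepcs\'enyi un-reachability algorithm so that $\tallyMachine_f$ lands in $\tally\NL$), and the same configuration-graph construction for the $\ANDgate$ circuit family with reject vertices wired to the output. Your explicit invocation of $\tally\NL = \tally\co\NL$ to justify taking a co-nondeterministic machine in the third inclusion is exactly what the paper does implicitly via its reference to Theorem~\ref{thm:tallyNL-in-uni-AND}.
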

  This theorem is proven by the cycle of inclusions in Lemmas 
    \ref{lem:mReduceUniAND_subset_cttUniformAND}, 
    \ref{lem:cttReduceUniAND_subset_cttTallyNL}, and
    \ref{lem:cttReduceTallyNL_subset_mUniformAND} below.

  \begin{lemma}
    \label{lem:mReduceUniAND_subset_cttUniformAND}
    $\Reducible{m}{\FAC^0}{\uniformAND{\FAC^0}} \subseteq 
     \Reducible{ctt}{\FAC^0}{\uniformAND{\FAC^0}}$
  \end{lemma}
  \begin{proof}
    The latter class is a generalization of the former.
  \end{proof}

  \begin{lemma}
    \label{lem:cttReduceUniAND_subset_cttTallyNL}
    $\Reducible{ctt}{\FAC^0}{\uniformAND{\FAC^0}} \subseteq 
     \Reducible{ctt}{\FAC^0}{\tally\NL}$
  \end{lemma}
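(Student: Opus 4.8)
The plan is to dualise the proof of Lemma~\ref{lem:dttReduceUniOR_subset_dttTallyNL}. Let $L \in \Reducible{ctt}{\FAC^0}{\uniformAND{\FAC^0}}$ with oracle language $L' \in \uniformAND{\FAC^0}$, so there is a function $\tau \in \FAC^0$ from $\{0,1\}^*$ to tuples of binary words with the property that \emph{every} word in $\tau(w) = (x_1,\ldots,x_m)$ lies in $L'$ if and only if $w \in L$. The observation that replaces the one used in the $\ORgate$ case is its dual: an $\ANDgate$ circuit \emph{rejects} an input $x$ precisely when some bit of $x$ that is set to $0$ is assigned to an input gate lying on a directed path to the output gate, or when a constant $0$ gate lies on a path to the output gate; otherwise it accepts. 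Hence, to witness that all words of $\tau(w)$ are in $L'$ it suffices to check, for each $0$-bit of each word and for the constant-$0$ gates, that no such path exists.

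With this in mind I would define $\tau' \in \FAC^0$ exactly as in~\eqref{eq:tau_or}, but with the roles of the bits $0$ and $1$ interchanged: for each bit position $i$ of each word $x_l$ of $\tau(w)$ the corresponding unary word $u_{l,i}$ is $1^{\langle i, |x_l| \rangle}$ when bit $i$ of $x_l$ is $0$, is the dummy word $1$ when bit $i$ of $x_l$ is $1$, and, for each word $x_l$, an extra word $u_{l,|x_l|} = 1^{\langle |x_l|,|x_l|\rangle}$ is added for the constant-$0$ gate check. That $\tau' \in \FAC^0$ is argued exactly as in the $\ORgate$ case: the polynomially many unary words are produced in parallel, each by a constant-depth circuit composing $\tau$ with the pairing and binary-to-unary routines of Section~\ref{sec:facfunc}.

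Next I would build the $\tally\NL$ oracle machine $\tallyMachine_f$, where $f \in \FAC^0$ is the uniformity function of the $\ANDgate$-circuit family recognising $L'$. On input $1$ it accepts; on input $1^k$ with $k > 1$ it un-pairs the binary encoding of $k$ to obtain $n$ and $i$, and accepts if and only if there is \emph{no} directed path from input gate $i$ (or, when $i = n$, from any constant $0$ gate) to the output gate of $f(1^n)$. As in the $\ORgate$ proof, $f(1^n)$ can be recomputed on demand in logarithmic space, so reachability in its gate graph is an $\NL$ predicate and non-reachability a $\co\NL$ predicate; since the machine uses space $O(\log k)$, the language it accepts lies in $\tally\co\NL = \tally\NL$ by Lemma~\ref{lem:tallynl_eq_tallyconl}. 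The correctness check then runs as the dual of the $\ORgate$ argument: every word of $\tau'(w)$ is accepted by $\tallyMachine_f$ iff for every $l$ no $0$-bit of $x_l$ and no constant $0$ gate of $f(1^{|x_l|})$ lies on a path to its output gate, iff every $x_l$ is accepted by its $\ANDgate$ circuit, iff $w \in L$; the dummy words $1$ contributed by $1$-bits are always accepted and hence never obstruct acceptance. Thus $\tau'$ together with $\tallyMachine_f$ is an $\FAC^0$ conjunctive truth-table reduction from $L$ to a language in $\tally\NL$.

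The main obstacle is the replacement of reachability by \emph{non}-reachability in $\tallyMachine_f$: in contrast to the $\ORgate$ case, correctness here rests essentially on $\NL$ being closed under complement, which is exactly what makes $\tally\NL = \tally\co\NL$ (Lemma~\ref{lem:tallynl_eq_tallyconl}) available. The remaining details — keeping the dummy word $1$ disjoint from the genuine pair encodings, the exact padding of unary strings, and the $\FAC^0$ bound on $\tau'$ — are handled verbatim as in Lemma~\ref{lem:dttReduceUniOR_subset_dttTallyNL}.
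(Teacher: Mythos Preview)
Your proposal is correct and follows essentially the same approach as the paper's own proof: both dualise Lemma~\ref{lem:dttReduceUniOR_subset_dttTallyNL} by swapping the roles of $0$ and $1$ in the definition of $\tau'$, and both replace the reachability test in $\tallyMachine_f$ by a non-reachability test, invoking $\NL = \co\NL$ (via Lemma~\ref{lem:tallynl_eq_tallyconl}) to keep the oracle language in $\tally\NL$. Your phrasing of the constant-$0$-gate case is in fact cleaner than the paper's, which contains a small slip in the accept/reject wording for that sub-case.
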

  \begin{proof}
    Let $L \in \Reducible{ctt}{\FAC^0}{\uniformAND{\FAC^0}}$ with oracle
    language $L' \in \uniformAND{\FAC^0}$.  That is, there exists a function
    $\tau \in \FAC^0$ mapping from $\{0,1\}^*$ to the set of tuples of binary
    words where \emph{all} words in the tuple $\tau(w) = (x_1,
    x_2,\ldots,x_m)$ are in $L' $ iff $w \in L$.
    
    To show that any of the binary words $\tau(w) = (x_1, x_2,\ldots,x_m)$ are
    not in $L'$ (i.e.\ are rejected by the $\ANDgate$ circuit family)  it is
    sufficient to show that there is a single bit 0 in a word from $\tau(w)$ such
    that the bit's assigned input gate is on a path to the output gate in the
    appropriate $\ANDgate$ circuit (or that there is a constant 0 gate in some
    circuit that is on a path to the output gate).

    With this in mind, we define the function  $\tau' \in \FAC^0$, from
    $\{0,1\}^*$ to the set of tuples of unary words.  $\tau'(w) = (u_1,\ldots,
    u_{q(|w|)})$, where $q(|w|)$ is polynomial in $|w|$, such that for each
    bit $i$ in each word $x_l$ in $\tau(w)$, there is a unary word $u_{l,i}$ in
    $\tau'(w)$ that encodes both $|x_l|$ (i.e.\ the length of $x_l$) and $i$, specifically: 
    \begin{equation}
      \label{eq:tau_and}
      u_{l,i} = 
      \begin{cases}
        1^{\langle |x_l|, |x_l| \rangle} &
          \text{ if } i = |x_l|,\\
        1^{\langle i, |x_l| \rangle} & 
          \text{ if } 0 \leq i \leq |x_l|-1 \text{ and  bit } i \text{ of } x_l \text{ is } 0,\\
        1                            & 
          \text{ if } 0 \leq i \leq |x_l|-1 \text{ and  bit } i \text{ of } x_l \text{ is } 1.\\
      \end{cases}
    \end{equation}
    Here $u_{l,i}$ is the $(l,i)$th word in $\tau'(w)$, $x_l$ is the $l$th word
    in $\tau(w)$ and $\langle \cdot , \cdot \rangle$ denotes the pairing
    function in Section~\ref{sec:facfunc}.
    (Note that $1$ bits are not uniquely encoded; our construction does not require it.)

    Now we argue that $\tau' \in \FAC^0$.
    Each of the $q(|w|)$ unary words in $\tau'(w)$ are computed independently and
    in parallel. %
    The $(l,i)$th unary word is
    computed as follows:
    First compute $x_l\in \{0,1\}^*$, which is the $l$th word in
    $\tau(w)$.  If the $i$th bit of $x_l$ is 1 then output the unary word 1.
    Otherwise compute the pairing $k = \langle i, |x_l| \rangle$
    (Section~\ref{sec:facfunc}), convert the binary number $k$ to unary to give $1^k$ 
    which is then output in an encoded form as $0^{z-k}1^{k}$ where $1 \leq k < z$,
    $z = 2^{2\lceil \log |w| + 1 \rceil} \in \bigO{|w|^2}$.  The $(l,i)$th
    sub-circuit of $\tau'$ is composed of a constant number of $\FAC^0$
    computable routines from Section~\ref{sec:facfunc} along with the computation of $\tau$ which is, by hypothesis,
    in $\FAC^0$. The polynomial number $q(|w|)$ of such constant depth computations are done in parallel, hence $\tau' \in \FAC^0$. 

    Let $f \in \FAC^0$, $f: \{ 1\}^* \rightarrow \mathcal{C}$, be the
    uniformity function of the $\ANDgate$-circuit family that recognises $L'$.
    We next define a non-deterministic Turing machine $\tallyMachine_f$ that takes unary
    input, and makes use of $f$. 
    The machine $\tallyMachine_f$ is defined to 
      accept on input word 1
      and reject
      input $1^{k}$ if $k>1$ and if the un-pairing (see Section~\ref{sec:facfunc})
      of the binary encoding of $k$ gives two binary
    numbers $n$ and $i$, such that \emph{there is a path} from the $i$th input
    gate to the output gate of circuit $f(1^n)$.
    $\tallyMachine_f$ also accepts if $i = n$ and there is a path
    from some constant 0 gate to the output gate of circuit $f(1^n)$.
    $\tallyMachine_f$ works as follows. $\tallyMachine_f$ computes the unary to binary conversion
    and the un-pairing routine in logspace (see Section~\ref{sec:facfunc}).
    By hypothesis, the uniformity function~$f $ is in $\FAC^0$
    so, by using the standard re-computation trick~\cite{AB2009x,Pap1993x}
    for logspace Turing machines
      and the un-reachability algorithm~\cite{Imm1988p,Sze1988p}
       $\tallyMachine_f$ both computes~$f$ 
    and tests non-reachability from input gate $i$ to the output gate of
    circuit $f(1^n)$ in non-deterministic logspace.
    Hence,
      if there is a path from input gate $i$ (or some constant 0 gate) to the output gate
        then $\tallyMachine_f$ rejects,
      otherwise if no path is found 
        then $\tallyMachine_f$ accepts.
    Moreover, since $\tallyMachine_f$ uses space $O(\log k)$, the language it  accepts is in $\tally\NL = \tally\co\NL$.

    $\tallyMachine_f$ will be our $\tally\NL$ oracle machine. 
    We now prove that for any $w \in \{0,1\}^*$,
        all words
          in the tuple $\tau'(w)$
            are accepted by the $\tallyMachine_f$ oracle machines iff $w \in L$.
    If $w \notin L$ then 
      there exists a word $x$ in the tuple $\tau(w)$ 
        with at least one
        bit with value 0 that is assigned to an input gate  
        that is on a path to the output gate in $\ANDgate$ circuit $f(1^{|x|})$. 
    This means that the tuple of words  $\tau'(w)$
        contains at least one unary word that encodes $|x|$ and $i$, where $i$ is the bit position assigned to 0.
    By the construction in the previous paragraph, this word in $\tau'(w)$ is rejected by $\tallyMachine_f$. 

    If $w \in L$ then by hypothesis there are no words in $\tau(w)$ that 
        are rejected by the uniform $\ANDgate$ circuit family.
        Any $1$'s in words from $\tau(w)$ become encoded as the input 1 to $\tallyMachine_f$, which is accepted by $\tallyMachine_f$ since $k=1$.   
        While $\tau(w)$ may contain words $x$ with bits set to 0 (or constant bits set to~0),
        these bits are not assigned to input (or constant) gates that have a path to the output gate 
        in the circuit $f(1^{|x|})$.
     Hence, none of the words in $\tau'(w)$ will be rejected by the oracle calls to $\tallyMachine_f$.  
 
     Therefore $\tau'$ is a conjunctive truth-table reduction from $L$ to a language in $\tally\NL$. 
  \end{proof}

  \begin{lemma}
    \label{lem:cttReduceTallyNL_subset_mUniformAND}
     $\Reducible{ctt}{\FAC^0}{\tally\NL} \subseteq 
       \Reducible{m}{\FAC^0}{\uniformAND{\FAC^0}}$
  \end{lemma}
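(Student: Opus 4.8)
The plan is to mirror the proof of Lemma~\ref{lem:dttReduceTallyNL_subset_mUniformOR}, but with the roles of $0$ and $1$ (equivalently, accepting and rejecting configurations) exchanged so that we build $\ANDgate$ circuits from co-nondeterministic configuration graphs. Let $L \in \Reducible{ctt}{\FAC^0}{\tally\NL}$ with conjunctive truth-table reduction $\tau \in \FAC^0$ and tally oracle language $T \in \tally\NL = \tally\co\NL$ (Lemma~\ref{lem:tallynl_eq_tallyconl}), so that \emph{all} words in the tuple $\tau(w) = (x_1,\ldots,x_\ell)$ lie in $T$ iff $w \in L$. First I would fix a \emph{co}-nondeterministic logspace machine $\tallyMachine$ accepting $T$ whose every computation path accepts exactly when its input is in $T$. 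The many-one reduction $r \in \FAC^0$ is defined essentially as before: $r(w)_k = 0$ if $1^k$ appears in the tuple $\tau(w)$, and $r(w)_k = 1$ otherwise. The argument that $r \in \FAC^0$ is the same bitwise construction as in Lemma~\ref{lem:dttReduceTallyNL_subset_mUniformAND}'s analogue: compute $\tau(w)$, translate each padded unary word $0^{q-k}1^k$ into the indicator string $0^{q-k}10^{k-1}$, and bitwise-$\ANDgate$ (rather than $\ORgate$) the $\ell$ resulting strings together, with a final negation of each bit so that ``$1^k$ present'' forces bit $k$ to $0$; all of this is plainly in $\FAC^0$.

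Next I would build the uniform polynomial-size $\ANDgate$ circuit family $\mathcal{C}$ with uniformity function $f_\tallyMachine : \{1\}^* \to \mathcal{C}$. On input $1^m$, $f_\tallyMachine$ creates the $m$ configuration graphs $C_{\tallyMachine,k}$ for $k \in \{1,\ldots,m\}$ (each $\FAC^0$-computable by Lemma~\ref{lem:config_in_AC0}), then stitches them into one $\ANDgate$ circuit: each edge becomes a wire; the start vertex of $C_{\tallyMachine,k}$ becomes the $k$th input gate; every other vertex becomes an $\ANDgate$ gate; a single constant $1$ gate is wired into every $\ANDgate$ gate; for each $k$ all \emph{reject} vertices of $C_{\tallyMachine,k}$ are wired into a gate $a_k$; and all the $a_k$ are wired into a single $\ANDgate$ output gate. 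This is polynomial-size and straightforwardly $\FAC^0$-uniform, exactly as in Theorem~\ref{thm:tallyNL-in-uni-AND}.

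Then I would verify correctness of $r$. Suppose $w \in L$. Then every word $1^j$ in $\tau(w)$ is in $T$, so $\tallyMachine$ has no rejecting path on $1^j$, hence in $C_{\tallyMachine,j}$ there is no path from the start (input) vertex to any reject vertex; thus the $j$th input gate has no path to the output. For every bit position $j$ with $r(w)_j = 0$ this is the case, and all bits set to $1$ are harmless since the circuit is an $\ANDgate$ circuit with all constants equal to $1$ --- so the circuit evaluates to $1$ and accepts $r(w)$. Conversely, suppose $w \notin L$. Then some $1^j$ in $\tau(w)$ is \emph{not} in $T$, so $\tallyMachine$ has a rejecting path on $1^j$, giving a path from input gate $j$ to the output gate in $C_{\tallyMachine,m}$ where $m = |r(w)|$; since $r(w)_j = 0$ by construction, that input gate is set to $0$ and forces the output to $0$, so the circuit rejects $r(w)$. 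Hence $w \in L$ iff $\mathcal{C}$ accepts $r(w)$, and $r$ is an $\FAC^0$ many-one reduction from $L$ to a language in $\uniformAND{\FAC^0}$.

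The one point that needs real care --- the ``hard part'' --- is the same subtlety that arises in the $\ORgate$ case: the circuit has a \emph{fixed} number $q$ of wires used to encode each unary string from $\tau(w)$, so one must check that the length of $r(w)$ is exactly the parameter $m$ for which $f_\tallyMachine(1^m)$ has been designed, i.e.\ that the binary/unary padding conventions line up so that ``$1^j \in \tau(w)$'' translates to ``bit $j$ of $r(w)$ is $0$'' with the indices $j$ ranging over precisely $\{1,\ldots,m\}$ and with $j \le m$ guaranteed by the polynomial length bound on $\tau$'s outputs. Everything else is a routine transcription of the $\ORgate$ constructions with $0$ and $1$ (and reachability versus the Immerman--Szelepcsényi non-reachability used in Lemma~\ref{lem:cttReduceUniAND_subset_cttTallyNL}) interchanged.
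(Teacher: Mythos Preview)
Your approach is essentially identical to the paper's: define $r(w)_k = 0$ iff $1^k$ occurs in $\tau(w)$, build the $\ANDgate$ circuit from the $m$ configuration graphs of a co-nondeterministic machine for $T$ with start vertices as input gates, reject vertices feeding $a_k$, and a constant $1$ wired into every $\ANDgate$ gate. The correctness argument matches as well.

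One small slip: your $\FAC^0$ computation of $r$ is miswritten. If you produce the indicator strings $0^{q-k}10^{k-1}$ and then bitwise-$\ANDgate$ them, the result is all zeros whenever two distinct lengths occur in $\tau(w)$, and the subsequent negation gives all ones --- not $r(w)$. You want either to bitwise-$\ORgate$ the indicators and then negate, or (as the paper does) to first form the complemented indicators $1^{q-k}01^{k-1}$ and then bitwise-$\ANDgate$ them with no further negation. Either way the claim ``all of this is plainly in $\FAC^0$'' is of course correct; only the specific Boolean combination you wrote down is off.
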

  \begin{proof}
    Let $L \in \Reducible{ctt}{\FAC^0}{\tally\NL}$ with  $T \in \tally\NL$ as the oracle language.
    That is, there exists a function $\tau \in \FAC^0$ that maps
     $\{ 0,1 \}^*$ to 
      the set of tuples of unary words, where
      all words in the tuple $\tau(w) = (x_1, x_2,\ldots,x_\ell)$ are in $T  $ iff $w \in L$.

    Let $r:\{0,1\}^* \to \{0,1\}^*$.
    Let the notation $r(w)_k$ denote the $k$th bit of the word $r(w)$.
    The function~$r$ is defined in a bitwise fashion as follows:
    \begin{equation}
      r(w)_k = 
      \begin{cases}
         0 & \text{if } 1^k \text{ is in the tuple } \tau(w),\\
         1 & \text{otherwise.}
      \end{cases}
    \end{equation}
    We claim that $r$ is an $\FAC^0$ many-one reduction from $L$ to a language in $\uniformAND{\FAC^0}$.

    First we prove that $r \in \FAC^0$.
    The circuit that computes $r(w)$ first computes the tuple $\tau(w)$,
      which is possible since $\tau \in \FAC^0$. 
    Without loss of generality we say
      that %
        $\tau(w)$ is a tuple of $\ell \in \mathbb{N}$ unary words,
          each of length $\leq q  \in \mathbb{N}$,
            and each of which is padded up to length $q$ with~$0$'s
            (i.e. the unary word $1^k$ is padded to be $0^{q-k}1^{k}$;
            this technicality comes from the fact that the circuit has
            a fixed number $q$ of wires used encode a unary string which is dependent on the circuit input).
    Then, in constant depth, the circuit translates each string of the form
      $0^{q-k}1^k$ into a string of the form $1^{q-k}0^11^{k-1}$.
    All $\ell$ such words are then bitwise $\ANDgate$ed to give a single binary  
      string of length $q$, that represents $r(w)$. 
    This is all easily achieved in $\FAC^0$.

    We now describe a uniform polynomial-size $\ANDgate$ circuit family $\mathcal{C}$. 
    Let $f_\tallyMachine : \{1\}^* \rightarrow \mathcal{C}$ be
      the uniformity function of the circuit family $\mathcal{C}$. 
    On $1^m$, the function $f_\tallyMachine$ creates  $m$ configuration graphs:
    one configuration graph $C_{\tallyMachine,k}$ of machine $\tallyMachine$
    (that accepts $T$) on input $1^k$ for each $k \in \{1,\ldots, m\}$ (a
    generalization of the technique used in the proof of
    Theorem~\ref{thm:tallyNL-in-uni-AND}).  Then, each of the $m$ graphs are
    modified and connected together to create a single $\ANDgate$ circuit as
    follows.  Each edge becomes a wire.  The vertex in $C_{\tallyMachine,k}$
    that represents the start configuration of  $\tallyMachine$ on input $1^k$
    becomes the $k$th input gate of the $\ANDgate$ circuit. 
    All other vertices become an $\ANDgate$ gate.  For each $k$, all reject
    vertices of the graph $C_{\tallyMachine,k}$ (representing the rejecting 
    configurations) are wired into a new $\ANDgate$ gate $o_k$.  We add a single
    constant 1 gate which is wired into every $\ANDgate$ gate in the circuit.
    Finally each of the~$o_k$ gates, where $1 \leq k \leq m$, are wired into a
    single $\ANDgate$ gate which is the output gate. $\mathcal{C}$ is of
    polynomial size (each circuit $f_\tallyMachine(1^m)$ is of size polynomial
    in $m$), and it is relatively straightforward to verify that $\mathcal{C}$
    is $\FAC^0$ uniform. 

    We need to argue that the circuit family $\mathcal{C}$ accepts $r(w)$ iff
    $w \in L$.  Suppose $w \notin L$. This implies that the tuple $\tau(w)$
    contains at least one word $1^j$ not in the tally set $T$. In turn, this
    implies that bit $j$ in $r(w)$ is 0 (formally, $r(w)_j = 0$).
    Let $|r(w)| = m$. 
    The fact that $\tallyMachine$ rejects $1^j$ implies that the circuit $c_m =
    f_\tallyMachine(1^m) \in \mathcal{C}$ is constructed in such a way that its
    $j$th input gate is on a path to its output gate. Input gate $j$ is set to
    0, therefore circuit $c_m$ rejects $r(w)$.

    Suppose $w \in L$. Hence, all words in the tuple $\tau(w)$ are in the
    tally set $T$. 
    Let  $1^j$  be any unary word in the tuple $\tau(w)$.
    In turn, this implies that bit $j$ in $r(w)$ is 0 (formally, $r(w)_j = 0$).
    Let $|r(w)| = m$. Consider the circuit $C_m = f_\tallyMachine(1^m) \in
    \mathcal{C}$.  Since the Turing machine $\tallyMachine$ does not reject
    $1^j$, this implies that there is no path from input gate $j$ in $C_m$ to
    the output gate of $C_m$. 
    Since  $C_m$ is an $\ANDgate$ circuit with no paths from the input gates that are
    set to 0 to the output gate, and where there are no constant 0 gates,  it
    accepts $r(w)$.

    Therefore $r$ is a many-one reduction from $L$ to a language in $\uniformAND{\FAC^0}$. 
  \end{proof}

  \section{Semi-uniform circuit families}
  \label{sec:uni-semi}
    We introduce a definition of semi-uniform families of Boolean circuits.
    This definition is inspired by the concept in membrane
    systems~\cite{PRRW2009x}.  Polynomial-size semi-uniform $\ORgate$ circuits,
    and $\ANDgate$ circuits, are shown to characterize~$\NL$.

    \begin{definition}[Semi-uniform circuit family]
      \label{def:semiuniformCircuitFam}
      A \emph{semi-uniform circuit family}~$\mathcal{C}$ is a set of
      Boolean circuits, each with a single output gate and no input gates, such
      that there is a function~$\semiencoder : \{ 0,1\}^\ast \rightarrow
      \mathcal{C}$ (computable within some resource bound)
      where~$\semiencoder(x) = C_x$.  We say that a semi-uniform circuit family~$\mathcal{C}$ 
      decides a language~$X$ if for each~$x$,
      the circuit~$\semiencoder(x) = C_x \in \mathcal{C}$ evaluates to~$1$ if~$x \in X$ and~$0$
      if~$x \notin X$.
    \end{definition}

    Here, $h$ is called the semi-uniformity function of $\mathcal{C}$. The
    intuition behind the definition is that the semi-uniformity function has
    access to the entire input word, whereas more standard uniformity functions
    access only the input word length (in unary).

    \begin{definition}[$\FOSOR$]
      Let $\FOSOR$ be the set of decision problems over a binary
      alphabet that are solved by $\FAC^0$ semi-uniform families of $\ORgate$
      circuits. 
    \end{definition}
    $\FOSAND$ is defined analogously using $\ANDgate$ circuits.
    Finally, the class\\ $\semiunif{\FAC^0}{\CANDCOR}$ is defined analogously
    using circuits that have both $\ANDgate$ and $\ORgate$ gates. 
    The proof of the following lemma is straightforward. 
    \begin{lemma}
      \label{thm:semiuniformCircuitsP}
      $\semiunif{\FAC^0}{\CANDCOR}=\P$
    \end{lemma}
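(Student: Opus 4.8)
The plan is to prove the two inclusions $\semiunif{\FAC^0}{\CANDCOR} \subseteq \P$ and $\P \subseteq \semiunif{\FAC^0}{\CANDCOR}$ separately. The first inclusion is the easy direction: given a language $X \in \semiunif{\FAC^0}{\CANDCOR}$ with semi-uniformity function $\semiencoder \in \FAC^0$, a polynomial-time machine on input $x$ first computes the circuit $\semiencoder(x) = C_x$ (possible since $\FAC^0 \subseteq \P$, as $\DLOGTIME$-uniform $\AC^0$ circuits of polynomial size can be simulated in polynomial time), and then evaluates $C_x$. Since $C_x$ has no input gates, every gate's value is determined by the constant gates at its sources, so the circuit value can be computed in polynomial time by a standard topological-order evaluation (this is just the Circuit Value Problem restricted to input-free circuits, which is in $\P$). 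Hence $X \in \P$.

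For the reverse inclusion $\P \subseteq \semiunif{\FAC^0}{\CANDCOR}$, I would take a language $X \in \P$ decided by a deterministic Turing machine $M$ running in time $p(|x|)$ for some polynomial $p$. The idea is to encode the entire computation of $M$ on $x$ directly into a constant-depth circuit with no input gates, exploiting the fact that the semi-uniformity function $\semiencoder$ sees all of $x$ and can therefore hard-wire the relevant transition information as constant gates. First I would invoke the configuration-graph machinery: since $M$ runs in polynomial time on inputs of length $|x|$, its configurations can be described by strings of length $p(|x|) + O(\log p(|x|))$, and $M$ being deterministic means each configuration has exactly one successor, so the ``configuration graph'' $C_{M,x}$ on input $x$ is really a path (or a tree under predecessor, but for a deterministic machine a simple path) from the start configuration to a halting configuration. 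Because the machine is deterministic, reachability of the accept configuration from the start configuration is equivalent to the accept configuration lying on this unique path. The semi-uniformity function $\semiencoder(x)$ builds this graph — which by (an adaptation of) Lemma~\ref{lem:config_in_AC0} is $\FAC^0$-computable from $M$ and $x$, with the bits of $x$ now available to the uniformity machine so they can be burned into the constants governing transitions out of configurations whose head reads an input cell — and then converts it into a constant-depth $\CANDCOR$ circuit in the style of the proofs of Theorems~\ref{thm:tallyNL-in-uni-OR} and~\ref{thm:tallyNL-in-uni-AND}: edges become wires, vertices become gates, the start vertex becomes a constant $1$, a constant $0$ is wired into every gate, each accept vertex is $\ORgate$ed into the output gate. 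Then $C_x$ evaluates to $1$ iff there is a path from the start configuration to an accept configuration iff $M$ accepts $x$.

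The main subtlety — and where I would spend most of the care — is arguing that $\semiencoder \in \FAC^0$, i.e.\ that the whole circuit $C_x$ can be described by a $\DLOGTIME$-uniform $\AC^0$ circuit that takes $x$ (in unary-of-length / or rather with $x$ itself available to the uniformity machine) and outputs the description of $C_x$. This follows the same pattern as Lemma~\ref{lem:config_in_AC0}: deciding whether there is an edge between two given configurations amounts to checking local consistency of two $O(p(n))$-bit strings together with one application of $M$'s transition function, which depends on at most one bit of $x$ at a known position — exactly the kind of bit-indexing task $\DLOGTIME$-uniform $\AC^0$ handles, using the $\FAC^0$ routines of Section~\ref{sec:facfunc} for addressing and the pairing function. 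Note the circuit may be built purely with $\ORgate$ gates (as in Theorem~\ref{thm:tallyNL-in-uni-OR}), so in fact this shows $\P \subseteq \FOSOR$ as well; the statement is phrased with $\CANDCOR$ merely for uniformity with $\FOSAND$, and the analogous $\ANDgate$-only construction (Theorem~\ref{thm:tallyNL-in-uni-AND}, using that a deterministic machine's rejection is certified by the accept vertex being \emph{un}reachable) gives $\P \subseteq \FOSAND$ too. Combining the two inclusions yields $\semiunif{\FAC^0}{\CANDCOR} = \P$.
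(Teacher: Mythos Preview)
Your $\subseteq$ direction is fine and matches the paper's (implicit) argument.

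The $\supseteq$ direction, however, has a genuine gap. The configuration-graph technique you borrow from Lemma~\ref{lem:config_in_AC0} and Theorems~\ref{thm:tallyNL-in-uni-OR} and~\ref{thm:tallyNL-in-uni-AND} works only because logspace machines have \emph{polynomially many} configurations: that is exactly the $\bigO{2^s |w| |Q|}$ bound in Lemma~\ref{lem:config_in_AC0} with $s = O(\log |w|)$. A polynomial-time machine $M$ may use polynomial space, so its full configuration graph has $2^{\mathrm{poly}(|x|)}$ vertices and cannot be the output of an $\FAC^0$ function. If instead you mean to include only the $p(|x|)$ configurations actually visited on the computation path, then writing down the $t$-th such configuration already requires simulating $M$ for $t$ steps, which is precisely the $\P$-computation you are trying to push into the circuit; that is not an $\FAC^0$ task.

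The clearest symptom of the error is your remark that the circuit ``may be built purely with $\ORgate$ gates \ldots\ so in fact this shows $\P \subseteq \FOSOR$ as well''. The paper proves $\FOSOR = \NL$ (Lemma~\ref{lem:semi-mono-or-NL}), so your claim would yield $\P \subseteq \NL$. The argument cannot be correct as written.

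The paper's route avoids configuration graphs entirely. It starts from the standard fact that every $L \in \P$ has an $\FAC^0$-uniform polynomial-size circuit family over $\{\ANDgate,\ORgate,\NOTgate\}$. The semi-uniformity function, on input $x$, writes down that circuit for length $|x|$, replaces the input gates by constants encoding the bits of $x$, and eliminates the $\NOTgate$ gates via a dual-rail logic simulation (hardwiring both $x$ and its bitwise complement as constants, so that negated literals are available for free). The resulting input-free circuit uses only $\ANDgate$ and $\ORgate$ gates and is still polynomial-size, and the whole transformation is $\FAC^0$. This is exactly where having both $\ANDgate$ and $\ORgate$ gates is genuinely needed, consistent with $\FOSOR = \FOSAND = \NL$.
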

    \begin{proof} 
      Any problem in $\P$ has a circuit family
        $\mathcal{C}$ with circuits using $\ANDgate$, $\ORgate$, and $\NOTgate$ gates 
          that is uniform by some function $f \in \FAC^0$, $f : \{ 1\}^* \rightarrow \mathcal{C}$.
      There is a semi-uniformity function
        $f' : \{ 0,1\}^* \rightarrow \mathcal{C}'$ 
          for a semi-uniform circuit family $\mathcal{C}'$,
            that simulates $f$  in the following way:
      For all $x \in \{0,1\}^*$, $f'(x)$ produces a circuit
        without input gates and where the string $x$ and its bitwise complement are available as constants,
        and the circuit carries out a dual-rail logic simulation~\cite{Gol1977p,GHR1995x} of the circuit $f(|x|)$.  
    \end{proof}

    \begin{lemma}
      \label{lem:semi-mono-or-NL}
      $\FOSOR = \NL$.
    \end{lemma}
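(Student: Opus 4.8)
The plan is to prove the two inclusions $\FOSOR \subseteq \NL$ and $\NL \subseteq \FOSOR$ separately, reusing the configuration-graph machinery of Lemma~\ref{lem:config_in_AC0} and the construction in the proof of Theorem~\ref{thm:tallyNL-in-uni-OR}.

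For $\FOSOR \subseteq \NL$, let $X \in \FOSOR$ with semi-uniformity function $\semiencoder \in \FAC^0$, $\semiencoder(x) = C_x$. Since $\semiencoder$ is an $\FAC^0$ function its output has length polynomial in $|x|$, so $C_x$ is a polynomial-size circuit built only from $\ORgate$ and constant $0/1$ gates and having no input gates. The key observation is that such a circuit evaluates to $1$ exactly when there is a directed path from some constant $1$ gate to the output gate (a constant $0$ gate, and a sink $\ORgate$ gate of fan-in $0$, evaluate to $0$, the identity for $\ORgate$). I would therefore describe a non-deterministic logspace machine that, on input $x$, guesses such a path gate-by-gate, holding only the label of the current gate (which needs $O(\log|x|)$ bits) in its work tape; whenever it needs a bit of the description of $C_x$ — a gate type, the presence of a wire, or the identity of the output gate — it recomputes that bit directly from $x$ using the logspace algorithm for $\semiencoder$ (the standard re-computation trick, valid because $\FAC^0 \subseteq \FL$). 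Accepting iff a constant $1$ gate reaches the output puts $X$ in $\NL$.

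For $\NL \subseteq \FOSOR$, let $X \in \NL$ be accepted by a non-deterministic logspace machine $\tallyMachine$, so $\tallyMachine$ has an accepting computation on $x$ iff $x \in X$. Crucially, the semi-uniformity function has access to the entire word $x$, so — unlike in Theorem~\ref{thm:tallyNL-in-uni-OR}, where the all-$1$s trick was needed — I can build the genuine configuration graph $C_{\tallyMachine,x}$, which is $\FAC^0$-computable and of polynomial size by Lemma~\ref{lem:config_in_AC0}. I then convert it, in $\FAC^0$, into an $\ORgate$ circuit exactly as in the proof of Theorem~\ref{thm:tallyNL-in-uni-OR}: each edge becomes a wire, each vertex becomes an $\ORgate$ gate except the start-configuration vertex which becomes a constant $1$ gate, a constant $0$ is wired into every $\ORgate$ gate to keep all gates well-defined, and a fresh output $\ORgate$ gate receives a wire from every accepting-configuration vertex (and there are no input gates, as required by Definition~\ref{def:semiuniformCircuitFam}). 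The composition $x \mapsto C_x$ of these two $\FAC^0$ steps is in $\FAC^0$ and serves as a semi-uniformity function, and $C_x$ evaluates to $1$ iff the start configuration reaches an accepting configuration iff $\tallyMachine$ accepts $x$ iff $x \in X$; hence $X \in \FOSOR$.

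I expect the only real point of care to be in the $\subseteq$ direction: one must recognise that evaluating a constants-and-$\ORgate$ circuit is precisely a reachability question, and that interleaving this $\NL$ reachability search with the $\FL$-computation of $\semiencoder$ leaves the whole procedure in $\NL$ (i.e. $\NL$ is closed under composition with $\FL$-computable functions, via re-computation). Everything else is routine. It is worth noting — in contrast with the analogous $\ANDgate$ statement (Lemma~\ref{lem:semi-mono-and-NL}), which will invoke $\NL = \co\NL$ and the inductive counting / non-reachability algorithm — that this proof needs neither: the existential character of $\ORgate$ circuits matches the existential character of $\NL$ acceptance directly.
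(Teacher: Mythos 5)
Your proposal is correct and follows essentially the same route as the paper: for $\NL \subseteq \FOSOR$ you turn the configuration graph $C_{\tallyMachine,x}$ (built from the actual input $x$, exploiting semi-uniformity) into an $\ORgate$ circuit with the start configuration as a constant $1$ gate, and for $\FOSOR \subseteq \NL$ you nondeterministically guess a path from a constant $1$ gate to the output gate while recomputing the circuit description from $\semiencoder$. Your version is in fact slightly more explicit than the paper's about the logspace re-computation of $\semiencoder(x)$ and about fan-in-$0$ $\ORgate$ gates, but the argument is the same.
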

    \begin{proof} 
      ($\NL \subseteq \FOSOR$)
      Let $L \in \NL$.
      $L$ is accepted by a non-deterministic logspace Turing machine $M$,
        i.e.\ one or more computation paths are accepting exactly for 
        those words $w \in L \subseteq \{ 0,1 \}^*$.
      Consider the configuration graph $C_{M,w}$ for~$M$ on input $w\in\{0,1\}^*$,
        which is $\FAC^0$ computable from $M$ and $w$ (see Section~\ref{sec:config_graphs}).
      We modify the graph $C_{M,w}$ to create an $\ORgate$ circuit as follows.
      Each edge becomes a wire and each vertex becomes an $\ORgate$ gate, 
        except the start vertex (which represents the initial configuration of $M$ on $w$)
        which becomes a constant~$1$ gate.
      All {\em accepting vertices} (representing  accepting configurations) are also wired to this output gate.
      We add a single constant 0 gate which is wired into every $\ORgate$ gate in the circuit.
      If $w \in L$ the circuit accepts since there is a path from 1 to the output gate.
      If $w \not\in L$ the circuit rejects since there is no path from 1 to the output gate and a 0 feeds into that gate. 
      These simple modifications can be made in $\FAC^0$.
      
      Fixing the machine $M$, and then considering this transformation on the set of all 
      configurations graphs, one for each input $w \in \{0, 1\}^*$,
        we get a semi-uniform circuit family~$\mathcal{C}$.
      Members of such a semi-uniform circuit family are computable by an $\FAC^0$ function
      $f_M: \{0,1\}^* \rightarrow \mathcal{C}$. %

      ($\FOSOR \subseteq \NL$) 
      Let $\mathcal{C}$ be a semi-uniform $\ORgate$ circuit family that
      recognizes $L \in \FOSOR$, we claim that there is a non-deterministic
      logspace Turing machine $M$ that recognizes~$L$. Let $h: \{ 0,1 \}^\ast
      \rightarrow \mathcal{C}$ be the semi-uniformity function of
      $\mathcal{C}$. On input $x \in \{ 0,1 \}^\ast$, $M$ computes $h(x)$ and
      performs a simple reachability on the resulting $\ORgate$ circuit in the
      following way: $M$ guesses a gate, if that gate is a constant 1-gate $M$ 
      then guesses a path from that gate, if the
      path ends at the output gate $M$ accepts. 
    \end{proof}

    \begin{lemma}
      \label{lem:semi-mono-and-NL}
      $\FOSAND = \NL$.
    \end{lemma}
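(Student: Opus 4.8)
The plan is to mirror the proof of Lemma~\ref{lem:semi-mono-or-NL}, but using the dual (De Morgan) encoding of configuration graphs, exactly as Theorem~\ref{thm:tallyNL-in-uni-AND} is the $\ANDgate$-analogue of Theorem~\ref{thm:tallyNL-in-uni-OR}. The two inclusions to establish are $\NL \subseteq \FOSAND$ and $\FOSAND \subseteq \NL$.

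For $\NL \subseteq \FOSAND$: given $L \in \NL$, use $\NL = \co\NL$ (Lemma~\ref{lem:tallynl_eq_tallyconl} is the tally version, but $\NL=\co\NL$ is what is actually needed) so that $L$ is accepted by a co-nondeterministic logspace machine $M$, i.e.\ \emph{all} computation paths accept exactly when $w \in L$. On input $w$ one builds the configuration graph $C_{M,w}$, which is $\FAC^0$-computable from $M$ and $w$; turn each edge into a wire, each vertex into an $\ANDgate$ gate, make the start vertex a constant $0$ gate, wire a constant $1$ into every $\ANDgate$ gate, designate a fresh output $\ANDgate$ gate, and wire all \emph{rejecting} vertices into the output gate. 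Then $w \in L$ iff there is no path from the constant $0$ to the output gate, iff the circuit evaluates to $1$. Quantifying over all $w \in \{0,1\}^*$ for the fixed machine $M$ gives a semi-uniform family with $\FAC^0$ semi-uniformity function $f_M : \{0,1\}^* \to \mathcal{C}$.

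For $\FOSAND \subseteq \NL$: given a semi-uniform $\ANDgate$ family $\mathcal{C}$ deciding $L$ with semi-uniformity function $h \in \FAC^0$, on input $x$ the machine $M$ computes $h(x)$ and must check whether the $\ANDgate$ circuit evaluates to $1$. An $\ANDgate$ circuit evaluates to $0$ at its output iff some constant $0$ gate (or some input gate set to $0$ — but here there are no input gates, only constants, by Definition~\ref{def:semiuniformCircuitFam}) has a directed path to the output gate. So $M$ checks, using nondeterministic logspace reachability, whether \emph{no} constant $0$ gate reaches the output gate; since $\NL = \co\NL$ (equivalently, the non-reachability algorithm of~\cite{Imm1988p,Sze1988p}), this complementary reachability question is itself in $\NL$, and $h$ being $\FAC^0$ can be recomputed on the fly, so the whole test runs in nondeterministic logspace.

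The only mild subtlety — and the step I would be most careful about — is the direction of the complementation: for $\ORgate$ circuits acceptance is a positive reachability fact (a $1$-source reaches the output), which is directly in $\NL$, whereas for $\ANDgate$ circuits acceptance is a \emph{non}-reachability fact ($0$-sources do not reach the output), so the $\FOSAND \subseteq \NL$ direction leans on $\NL = \co\NL$ rather than being immediate, and the $\NL \subseteq \FOSAND$ direction leans on $\co\NL \subseteq \NL$ to get a co-nondeterministic machine in the first place. Everything else — $\FAC^0$-computability of the configuration graph and of the local graph-to-circuit rewriting, polynomial size — is routine and parallels the proofs already given for Theorems~\ref{thm:tallyNL-in-uni-AND} and~\ref{lem:cttReduceTallyNL_subset_mUniformAND} and Lemma~\ref{lem:semi-mono-or-NL}.
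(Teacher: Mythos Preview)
Your proposal is correct and matches the paper's proof almost exactly: the $\NL \subseteq \FOSAND$ direction is identical (configuration graph of a co-nondeterministic machine, start vertex becomes constant $0$, reject vertices feed the output $\ANDgate$), and for $\FOSAND \subseteq \NL$ you and the paper both reduce acceptance of the $\ANDgate$ circuit to a non-reachability question and invoke $\NL = \co\NL$. The only cosmetic difference is that the paper phrases the second direction as a \emph{co}-nondeterministic machine guessing a backward path from the output gate and accepting iff every such path ends at a constant~$1$, whereas you phrase it as checking non-reachability forward from constant~$0$ gates via Immerman--Szelepcs\'enyi; these are equivalent formulations. Your remark that the full $\NL = \co\NL$ (not merely the tally version of Lemma~\ref{lem:tallynl_eq_tallyconl}) is what is actually needed is well taken.
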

    \begin{proof} 
      ($\NL \subseteq \FOSAND$)
      Let $L \in \tally\NL$.
      Since $\tally\NL = \tally\co\NL$ (Lemma~\ref{lem:tallynl_eq_tallyconl}),
        this implies that
          $L$ is accepted by a co-non-deterministic logspace Turing machine $M$,
          for which all computation paths accept exactly for those words $w \in L \subseteq \{ 0,1 \}^*$.
      Consider the configuration graph $C_{M,w}$ for~$M$ on input $w\in\{ 0, 1\}^*$,
        which is $\FAC^0$ computable from $M$ and $w$ (see Section~\ref{sec:config_graphs}).
      We modify the graph $C_{M ,w}$ to create an $\ANDgate$ circuit as follows.
      Each edge becomes a wire
        and each vertex becomes an $\ANDgate$ gate, 
        except the start vertex (which represents the initial configuration on $M$ on $w$) 
          which becomes a constant 0 gate. 
      We add a new $\ANDgate$ gate that is the circuit's output gate.
      All reject vertex (representing the reject configurations) are wired into this output gate.
      We add a single constant 1 gate which is wired into every $\ANDgate$ gate in the circuit.
      These modifications can be made in $\FAC^0$.
      If $w \in L$ the circuit accepts since there is no path from 0 to the output gate.
      If $w \not\in L$ the circuit rejects since there is a path from 0 to the output gate. 

      Fixing the machine $M$, and then considering this transformation on the set of all 
      configurations graphs, one for each input $w \in \{0, 1\}^*$,
        we get a semi-uniform circuit family~$\mathcal{C}$.
      Members of such a semi-uniform circuit family are computable by an $\FAC^0$ function
      $f_M: \{0,1\}^* \rightarrow \mathcal{C}$. %

      ($\FOSAND \subseteq \NL$) 
      Let $\mathcal{C}$ be a semi-uniform $\ANDgate$ circuit family that
      recognizes $L \in \FOSAND$. We claim that there is a co-nondeterministic
      logspace Turing machine $M$ that recognizes~$L$ and thus $ L \in  \NL$. Let $h: \{ 0,1 \}^\ast
      \rightarrow \mathcal{C}$ be the semi-uniformity function of
      $\mathcal{C}$. On input $x \in \{ 0,1 \}^\ast$, $M$ computes $h(x)$ and
      performs a simple reachability on the resulting $\ANDgate$ circuit in the
      following way. Starting at the output gate, $M$ guesses a path along the
      reverse direction of the edges (wires) until the path terminates. If the
      path terminates at a constant 1 gate $M$ accepts, otherwise $M$ rejects 
      (in the latter case the path terminates at a 0 gate, as by definition there are no AND gates with 
      in-degree 0 in the circuit). $M$ accepts $x$ if and only if all of its 
      computations accept, which is equivalent to saying that each path from
      an in-degree 0 gate to the circuit's output gate begins at a constant 1 gate,
      and so the circuit accepts. 
    \end{proof}

    We have the following separation between uniform polynomial-size  and semi-uniform $\ORgate$  circuits. The result also holds for $\ANDgate$ circuits.
    \begin{theorem}
      \label{cor:uniNotEqSemi}
    \hspace{1ex}  %
    \begin{compactitem} 
      \item   $\FOOR \subsetneq \FOSOR$  
     \item $\FOAND \subsetneq \FOSAND$
        \end{compactitem}
    \end{theorem}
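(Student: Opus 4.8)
The plan is to squeeze the uniform classes strictly between themselves and $\NL$, and then invoke the semi-uniform characterizations $\FOSOR = \FOSAND = \NL$ from Lemmas~\ref{lem:semi-mono-or-NL} and~\ref{lem:semi-mono-and-NL}. First I would record the (easy) upper-bound direction: by Theorem~\ref{thm:uniformOR_subneq_dttTallyNL} we have $\FOOR \subseteq \Reducible{dtt}{\FAC^0}{\tally\NL}$, and the chain of implications between reduction types from Section~\ref{sec:defs} gives $\Reducible{dtt}{\FAC^0}{\tally\NL} \subseteq \Reducible{T}{\FAC^0}{\tally\NL}$; similarly Theorem~\ref{lem:uniformAND_subneq_cttTallyNL} yields $\FOAND \subseteq \Reducible{ctt}{\FAC^0}{\tally\NL} \subseteq \Reducible{T}{\FAC^0}{\tally\NL}$. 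Since $\Reducible{T}{\FAC^0}{\tally\NL} \subseteq \NL = \FOSOR = \FOSAND$, this already establishes the two inclusions $\FOOR \subseteq \FOSOR$ and $\FOAND \subseteq \FOSAND$ (intuitively, a semi-uniformity function can read the whole input and hard-wire it into the circuit, so semi-uniform circuits are at least as powerful as uniform ones).

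For strictness I would use that these inclusions factor through $\Reducible{T}{\FAC^0}{\tally\NL}$, which is a \emph{strict} subclass of $\NL$ by Lemma~\ref{lem:T_reduce_tallyNL_subneq_NL}. Thus $\FOOR \subseteq \Reducible{T}{\FAC^0}{\tally\NL} \subsetneq \NL = \FOSOR$ and $\FOAND \subseteq \Reducible{T}{\FAC^0}{\tally\NL} \subsetneq \NL = \FOSAND$, which gives the claimed proper containments. For concreteness one can exhibit an explicit witness: $\Parity \in \NL = \FOSOR = \FOSAND$, whereas $\Parity \notin \Reducible{T}{\FAC^0}{\tally\NL}$ — indeed the proof of Lemma~\ref{lem:T_reduce_tallyNL_subneq_NL} shows $\Reducible{T}{\FAC^0}{\tally\NL} \subseteq \textrm{non-uniform-}\AC^0$ and $\Parity \notin \textrm{non-uniform-}\AC^0$ by~\cite{FSS1984p} — so $\Parity$ separates $\FOOR$ from $\FOSOR$ and $\FOAND$ from $\FOSAND$.

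There is no genuine obstacle here; the work is purely in assembling the earlier results in the right order. The one point that needs care is the direction of each inclusion: we need $\FOOR$ (respectively $\FOAND$) to lie \emph{inside} $\Reducible{T}{\FAC^0}{\tally\NL}$, which relies on the upper bounds of Theorems~\ref{thm:uniformOR_subneq_dttTallyNL} and~\ref{lem:uniformAND_subneq_cttTallyNL} (themselves proved via the cluster equalities Theorems~\ref{lem:theORcluster_eq} and~\ref{lem:theANDcluster_eq}), combined with the semi-uniform characterization placing $\FOSOR$ and $\FOSAND$ at $\NL$, which is strictly larger.
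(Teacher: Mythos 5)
Your proposal is correct and follows essentially the same route as the paper, which simply cites Theorem~\ref{lem:T_reduce_tallyNL_subneq_NL} together with the containments of Figure~\ref{fig:summaryofallresults}: you have just spelled out explicitly the chain $\FOOR \subseteq \Reducible{dtt}{\FAC^0}{\tally\NL} \subseteq \Reducible{T}{\FAC^0}{\tally\NL} \subsetneq \NL = \FOSOR$ (and its $\ANDgate$/ctt analogue). The explicit $\Parity$ witness is a nice but unnecessary extra, since it is already what drives the separation inside Theorem~\ref{lem:T_reduce_tallyNL_subneq_NL}.
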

\begin{proof} Follows from Theorem~\ref{lem:T_reduce_tallyNL_subneq_NL} and the containments in Figure~\ref{fig:summaryofallresults}.
\end{proof}

  \section*{Acknowledgements}
    Many thanks to Eric Allender for valuable comments and discussion on
    uniform Boolean circuits and complexity classes within $\P$. We also thank
    Antonio E.\ Porreca,  David Doty, Jack Lutz and Dirk Walther for interesting discussions. 
    
  \appropriatebib
  \bibliography{diffuni}

\end{document}